\journal{}
\newtheorem{proposition}{Proposition}
\newtheorem{theorem}[proposition]{Theorem}
\newtheorem{corollary}[proposition]{Corollary}
\newtheorem{lemma}[proposition]{Lemma}
\newtheorem{example}{Example}
\newcommand{\Mod}[1]{\ (\textup{mod}\ #1)}
\newtheorem{definition}{Definition}
\DeclareMathOperator{\per}{per}
\DeclareMathOperator{\ord}{ord}
\DeclarePairedDelimiter{\floor}{\lfloor}{\rfloor}
\definecolor{codegreen}{rgb}{0,0.6,0}
\definecolor{codegray}{rgb}{0.5,0.5,0.5}
\definecolor{codepurple}{rgb}{0.58,0,0.82}
\definecolor{backcolour}{rgb}{0.95,0.95,0.92}
\lstdefinestyle{mystyle}{
	backgroundcolor=\color{backcolour},
	commentstyle=\color{codegreen},
	keywordstyle=\color{magenta},
	numberstyle=\tiny\color{codegray},
	stringstyle=\color{codepurple},
	basicstyle=\ttfamily\footnotesize,
	breakatwhitespace=false,
	breaklines=true,
	captionpos=b,
	keepspaces=true,
	numbers=left,
	numbersep=5pt,
	showspaces=false,
	showstringspaces=false,
	showtabs=false,
	tabsize=2
}
\begin{document}
	
	\begin{frontmatter}
		
		\title{A New Approach to Determine the Minimal Polynomials of Binary Modified de Bruijn Sequences}
		
		\cortext[cor1]{Corresponding author}
		\author[ugm,uny]{Musthofa\corref{cor1}}
		\ead{musthofa2019@mail.ugm.ac.id, musthofa@uny.ac.id}
		
		\author[ugm]{Indah Emilia Wijayanti}
		\ead{ind\_wijayanti@ugm.ac.id}
		
		\author[ugm]{Diah Junia Eksi Palupi}
		\ead{diah\_yunia@ugm.ac.id}
		
		\author[ntu]{Martianus Frederic Ezerman}
		\ead{fredezerman@ntu.edu.sg}
		
		\address[ugm]{Department of Mathematics, 
			Faculty of Mathematics and Natural Sciences,\\Universitas Gadjah Mada, Sekip Utara BLS 21, Yogyakarta 55281, Indonesia.}
		
		\address[uny]{Department of Mathematics Education, 
			Universitas Negeri Yogyakarta,\\
			1 Colombo Road, Yogyakarta 55281, Indonesia.}
		
		\address[ntu]{School of Physical and Mathematical Sciences, Nanyang Technological University,\\
			21 Nanyang Link, Singapore 637371.}
		
		\begin{abstract}
			A binary modified de Bruijn sequence is an infinite and periodic binary sequence derived by removing a zero from the longest run of zeros in a binary de Bruijn sequence. The minimal polynomial of the modified sequence is its unique least-degree characteristic polynomial. Leveraging on a recent characterization, we devise a novel general approach to determine the minimal polynomial. We translate the characterization into a problem of identifying a Hamiltonian cycle in a specially constructed graph. Along the way, we demonstrate the usefullness of computational tools from the cycle joining method in the modified setup.
		\end{abstract}
		
		\begin{keyword}
			Binary sequence \sep Hamiltonian cycle \sep linear span \sep minimal polynomial \sep modified de Bruijn sequence.
			\MSC[2010] 11B50 \sep 94A55 \sep 94A60
		\end{keyword}
		
	\end{frontmatter}
	
	
	\section{Introduction}\label{sec:intro}
	Given a positive integer $n$, a {\it binary de Bruijn sequence of order $n$} is an infinite sequence over $\mathbb{F}_2$ with period $2^n$. Each binary $n$-tuple appears exactly once per period. Much has been done in the studies of such sequences. One can start from a primitive polynomial $p(x) \in \mathbb{F}_2[x]$ with $\deg(p(x))=n$. The linear feedback shift register (LFSR) whose characteristic polynomial is $p(x)$ produces a maximum length sequence $\mathbf{m}$, also known as an $m$-sequence, of period $2^n-1$. Appending a $0$ to the longest run of zeroes in $\mathbf{m}$ results in a de Bruijn sequence. 
	
	We know from \cite{Bruijn46}, which independently rediscovered the result in \cite{StMarie1894}, that the number of binary de Bruijn sequences of order $n$ is $2^{2^{n-1}-n}$. The number of primitive polynomial over $\mathbb{F}_2$ is $n^{-1} \, \varphi (2^n-1)$, where $\varphi(\cdot)$ is the Euler totient function. The number of de Bruijn sequences from the set of all primitive polynomials becomes miniscule compared to $2^{2^{n-1}-n}$ as $n$ grows. 
	
	There are other methods than the route via $m$-sequences. Interested readers may want to consult Fredricksen's survey \cite{Fred82} and more recent works, such as the ones by Chang, Ezerman, Ling, and Wang in \cite{Chang2019}, by Gabric, Sawada, Williams, and Wong in~\cite{Gabric20}, and in many of their respective references. 
	
	The sequence $\mathbf{m}$ produced by a primitive polynomial $p(x)$ can also be seen as a modification of the corresponding de Bruijn sequence by removing a $0$ from the longest run of zeros. We call a sequence modified from any de Bruijn sequence by such removal of a $0$ a {\it modified de Bruijn sequence}. 
	
	A measure of predictability of a sequence is given by its {\it linear span} or {\it linear complexity}. It is defined to be the degree of the shortest linear recursion or minimal polynomial that produces the sequence. The higher the span is the less predictable the sequence becomes. For cryptographic applications, for example, sequences with large spans are desirable. 
	
	The respective minimal and maximal values for the linear span of de Bruijn sequences are $2^{n-1}+n$ and $2^n-1$, for $n \geq 3$. This fact and further results on the distribution of the linear span values can be found in \cite{Etzion99}. The extremal values were initially established as bounds in~\cite{Chan82}, where the upper bound was then shown to be achievable. Etzion and Lempel showed how to construct de Bruijn sequences of minimal span in~\cite{Etzion1984}. 
	
	Let $\mathbf{s}$ denote a modified binary de Bruijn sequence of order $n$ and period $2^n-1$. Let $\widetilde{\mathbf{s}}$ be its corresponding de Bruijn sequence. One often prefers using $\mathbf{s}$, instead of $\widetilde{\mathbf{s}}$, since the presence of the all zero string of length $n$ can be undesirable. On the other hand, there are instances when $\widetilde{\mathbf{s}}$ has an optimal linear complexity but $\mathbf{s}$ performs poorly in this measure.
	
	\begin{example}
		The sequence $\widetilde{\mathbf{s}}:=(0000100110101111)$, with $n=4$, has linear complexity $15$. The linear complexity of $\mathbf{s}=(000100110101111)$ drops significantly to $4$. 
	\end{example}
	
	Much less is known regarding the minimal polynomials of modified de Bruijn sequences. An early work on this topic was done by Mayhew and Golomb in~\cite{Mayhew1990}.
	Subsequent works by Kyureghyan in \cite{Kyureghyan2008}, by Tan, Xu, and Qi in \cite{Tan2018}, and a more recent one by Wang, Cheng, Wang, and Qi in \cite{Wang2020} have not managed to supply any systematic method to determine the minimal polynomials.
	
	For any $n \geq 4$, let $\Omega$ be the set of all nonzero polynomials of degree $< n$, that is,
	\begin{equation}\label{eq:omega}
		\Omega := \{0 \neq h(x) : h(x) \in \mathbb{F}_2[x] \mbox{ with } \deg(h(x)) < n\}.
	\end{equation}
	In this work, we propose a general method to design the minimal polynomial of a modified de Bruijn sequence. To briefly outline our method, we defer the formal definition of terms related to polynomials to Section \ref{sec:prelims}. We pick up from where the work by Tan {\it et al.} in~\cite{Tan2018} ends. In particular, we rely on the following useful characterization.
	
	\begin{theorem}\cite[Theorem 3.2]{Tan2018}\label{teotan3}
		Let $m$ and $n$ be positive integers satisfying $m \geq n > 1$. A polynomial $f(x) \in \mathbb{F}_2[x]$ of degree $m$ is the minimal polynomial of a modified de Bruijn sequence of order $n$ if and only if the following conditions hold:	
		\begin{enumerate}
			\item The polynomial $f(x)$ satisfies 
			\begin{equation}\label{eq:on_f}
				f(0) = 1 \mbox{ and its period is } \per(f(x)) = 2^n-1.
			\end{equation}
			\item Let $f^*(x)$ be the reciprocal polynomial of $f(x)$. There exist $g(x) \in \mathbb{F}_2[x]$ with $\deg(g(x)) = m-n$ such that
			\begin{align}
				& g(0) = 1, \label{eq:g_1}\\
				& \gcd(g(x),f^*(x) ) = 1 \mbox{, and } \label{eq:fg_1}\\
				& \Omega = \{(g(x) \, x^i \Mod{f^*(x)}) \Mod{x^n} :  0 \leq i < 2^n-1\}.\label{eq:fg_2}
			\end{align}
		\end{enumerate}
	\end{theorem}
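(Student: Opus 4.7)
The strategy is to translate condition (iii) into the defining property of a modified de Bruijn sequence of order $n$, namely that its length-$n$ sliding windows traverse every nonzero vector of $\mathbb{F}_2^n$ exactly once per period, and then prove both implications by chaining through this equivalence. The central tool is a clean identity relating each such $n$-window to a truncation of $g(x)\, x^i$ modulo $f^*(x)$.

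I would first set up the generating function framework. Given that $f$ is the minimal polynomial of $\mathbf{s}$ with $f(0)=1$, the sequence admits the rational representation
\[
G(x) := \sum_{k \geq 0} s_k\, x^k = \frac{h(x)}{f^*(x)} \in \mathbb{F}_2[[x]],
\]
for a unique $h(x) \in \mathbb{F}_2[x]$ of degree less than $m$ with $\gcd(h, f^*)=1$. Since $f^*(0)=1$, the indeterminate $x$ is a unit in $\mathbb{F}_2[x]/\langle f^*(x)\rangle$, and a direct calculation shows that the sequence shifted by $i$ positions has generating function $h_i(x)/f^*(x)$ with $h_i(x) \equiv x^{-i}\, h(x) \Mod{f^*(x)}$.

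The key identity I would establish, by truncating $G^{(i)}(x) = h_i(x)/f^*(x)$ modulo $x^n$ in $\mathbb{F}_2[[x]]$, is
\[
s_i + s_{i+1}\, x + \cdots + s_{i+n-1}\, x^{n-1} \equiv h_i(x)\cdot (f^*(x))^{-1} \Mod{x^n},
\]
where the inverse is taken in $\mathbb{F}_2[x]/\langle x^n\rangle$ and exists because $f^*(0)=1$. Multiplication by this unit is an $\mathbb{F}_2$-linear bijection permuting $\Omega$, so the $n$-windows of $\mathbf{s}$ exhaust $\Omega$ if and only if $\{h_i(x) \bmod x^n : 0 \leq i < 2^n - 1\} = \Omega$. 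Re-indexing $i \mapsto -i \bmod (2^n - 1)$ and setting $g = h$ converts this into condition (iii).

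With the equivalence in hand, the ($\Leftarrow$) direction is immediate: the sequence with generating function $g/f^*$ has minimal polynomial $f$ (by $\gcd(g,f^*)=1$), period $\per(f)=2^n-1$, and its $n$-windows cover $\Omega$, so it is a modified de Bruijn sequence of order $n$. For ($\Rightarrow$), condition (i) is read off the period and pure periodicity of $\mathbf{s}$, and $g$ is taken to be the numerator of an appropriately chosen shift of $\mathbf{s}$. The main obstacle I anticipate is justifying that one can always select a shift whose numerator has degree exactly $m - n$ with unit constant term; I would handle this by analyzing the cyclic orbit $\{h(x)\, x^i \Mod{f^*(x)} : 0 \leq i < 2^n - 1\}$ through the degree filtration of $\mathbb{F}_2[x]/\langle f^*(x)\rangle$, exploiting the guaranteed occurrence of the distinguished $n$-window $(1,0,\ldots,0)$ in $\mathbf{s}$ to pin down the degree of the corresponding numerator. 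The structural equivalence above is the conceptual heart of the argument; the degree normalization is where the combinatorial work concentrates.
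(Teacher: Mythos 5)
Your core equivalence is correct and matches the machinery this paper imports from Tan et al.: the shift formula $h_i \equiv x^{-i}h \Mod{f^*(x)}$, the truncation identity, and the observation that multiplication by the unit $(f^*(x))^{-1}$ modulo $x^n$ permutes $\Omega$ together show that the length-$n$ windows of $\mathbf{s}$ exhaust $\Omega$ if and only if condition (\ref{eq:fg_2}) holds for the numerator of (any shift of) $\mathbf{s}$; with that, your backward direction is essentially complete. Be aware that the paper itself gives no proof of Theorem \ref{teotan3} (it cites \cite{Tan2018}, with Lemma \ref{teotan2} as the window characterization); the closest in-paper argument to what you need is Theorem \ref{thm:canonical_g(x)}, which performs the degree normalization in the special case $f^*(x)=F(x)$.

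The genuine gap is in the forward direction, precisely at the step you flagged: producing a shift whose numerator has degree exactly $m-n$ and constant term $1$. Your proposed device, the window $(1,0,\ldots,0)$, does not accomplish this. By your own identity, the numerator attached to that window satisfies $h_i \equiv f^*(x) \Mod{x^n}$, which forces $h_i(0)=1$ but says nothing about $\deg(h_i)$; for instance, with $m=n=4$ and $f(x)=x^4+x+1$ the numerator at the window $(1,0,0,0)$ is $x^3+1$, of degree $3$, while $m-n=0$. The working argument needs two steps absent from your sketch. First, a lower bound: every shifted numerator has degree at least $m-n$, since if $\deg(h_r)<m-n$ then $x^n h_r$, having degree less than $m$, is itself the numerator of a further shift and vanishes modulo $x^n$, i.e.\ the zero window would occur, contradicting $0\notin\Omega$. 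Second, an upper bound at one specific shift: since $x^{n-1}\in\Omega$, some numerator has the form $h_t=\widetilde{h}(x)\,x^{n}+x^{n-1}$, and multiplying by $x^{-(n-1)}$ (legitimate because $x$ is a unit modulo $f^*(x)$) shows the numerator at shift $t+n-1$ equals $\widetilde{h}(x)\,x+1$, of degree at most $m-n$ and constant term $1$, hence of degree exactly $m-n$ by the lower bound; coprimality with $f^*(x)$ is inherited since shifting multiplies by a unit. In window terms the distinguished position is the head of the run of $n-1$ zeros, the window $(0,\ldots,0,1)$, advanced $n-1$ further steps --- not the window $(1,0,\ldots,0)$. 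This is exactly the two-step argument of Theorem \ref{thm:canonical_g(x)} with $F(x)$ replaced by $f^*(x)$, and your proof is complete once you substitute it for your sketch.
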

	Theorem~\ref{teotan3} requires the two polynomials $f(x)$ and $g(x)$ not only to satisfy their respective conditions in (\ref{eq:on_f}) and (\ref{eq:g_1}), but also to simultaneously meet the requirements in (\ref{eq:fg_1}) and (\ref{eq:fg_2}) with $\Omega$ as defined in (\ref{eq:omega}). 
	
	Our first step is to posit an auxiliary self-reciprocal polynomial 
	\begin{equation}\label{eq:main}
		F(x) := \sum_{i=0}^{2^n-2} x^i = 1 + x + \ldots + x^{2^n-3}+ x^{2^n-2}
	\end{equation}
	by taking the product of \emph{all} elements in the set 
	\begin{equation}\label{eq:setA}
		\mathcal{A}:= \bigcup_{\substack{\delta \mid n \\ \delta \neq 1}} 
		\{\mbox{irreducible } r(x) \in \mathbb{F}_2[x] : \deg(r(x))=\delta\}.
	\end{equation}
	Since the product of all irreducible polynomials whose degrees divide $n$ is $x^{2^n}+x$ and the only irreducible polynomials of degree $1$ are $x$ and $x+1$, it is clear that 
	\[
	F(x)= \prod_{r(x) \in \mathcal{A}} r(x) = \frac{x^{2^n}+x}{x \, (x+1)}. 
	\]
	
	The next move is to transform the problem of finding a suitable $g(x)$ into a problem of determining a Hamiltonian cycle $\mathcal{H}$ in a specially constructed graph $\Gamma_n$. We give a systematic way to determine $g(x)$. The details are to be covered in Sections \ref{sec:poly2graph} and \ref{sec:hamiltonian}. 
	
	\medskip
	
	\noindent
	{\bf Our main contributions}
	\begin{enumerate}
		\item For a given $n \geq 3$, we construct a graph $\Gamma_n$ with the property that every Hamiltonian cycle $\mathcal{H}$ in $\Gamma_n$ corresponds to a modified de Bruijn sequence. We then propose basic algorithms to identify numerous Hamiltonian cycles in $\Gamma_n$.
		\item To each Hamiltonian cycle $\mathcal{H}$ in $\Gamma_n$ we associate a canonical generator polynomial $c_{\mathcal{H}}(x)$. Once $c_{\mathcal{H}}(x)$ is found, we compute $d(x):= \gcd(c_{\mathcal{H}}(x),F(x))$ and prove that the minimal polynomial $m_{\mathbf{s}}(x)$ of $\mathbf{s}$ is $f^{*}(x)$, where $\displaystyle{f(x):=\frac{F(x)}{d(x)}}$.
		\item We supply basic computational tools as proofs of concept.
	\end{enumerate}
	
	\section{Preliminaries}\label{sec:prelims}
	
	Let $\mathbf{s}:=s_0,s_1,\ldots$ be an infinite sequence over a given finite field $\mathbb{F}_q$. If there is a positive integer $N$ for which $N$ is the smallest number such that $s_i = s_{i+N}$ for all $i \geq 0$, then $\mathbf{s}$ is an $N$-{\it periodic} sequence and we write $\mathbf{s}=(s_0,s_1,\ldots,s_{N-1})$. Such an $N$ is the {\it period} $\per(\mathbf{s})$ of $\mathbf{s}$. The sum of two infinite sequences $\mathbf{s}:=s_0,s_1,\ldots$ and $\mathbf{t}:=t_0,t_1,\ldots$ over the same finite field is $\mathbf{s}+\mathbf{t} =s_0+t_0, s_1+t_1, \ldots$ and the scalar multiple $c \, \mathbf{s}$ with $c \in \mathbb{F}_q$ is simply $c \, s_0, c \, s_1,\ldots$.  Henceforth, unless otherwise stated, all sequences in this work are binary, that is, $q=2$. 
	
	Let $L$ be the {\it (left) shift operator} that sends 
	\begin{equation}\label{eq:opL}
		\mathbf{s}=(s_0,s_1,\ldots,s_{N-1}) \mapsto L(\mathbf{s})=(s_1,\ldots,s_{N-1},s_0).
	\end{equation}
	By convention $L^0$ fixes the sequence. Two sequences $\mathbf{a}$ and $\mathbf{b}$ are called {\it distinct} or {\it (cyclically) inequivalent} if one is \emph{not} the cyclic shift of the other, that is, there is no integer $k \geq 0$ such that $\mathbf{a} = L^k \mathbf{b}$.
	
	A monic polynomial $f(x)$ in the ring of binary polynomials $\mathbb{F}_2[x]$ of indeterminate $x$ is a {\it characteristic polynomial} of $\mathbf{s}=(s_0,s_1,\ldots,s_{N-1})$ if $f(L) (\mathbf{s}) =(0,0,\ldots,0)$. One can then call $\mathbf{s}$ a {\it linear feedback shift register} (LFSR) sequence. As an LFSR sequence, $\mathbf{s}$ may have many characteristic polynomials. We identify the unique characteristic polynomial $m_{\mathbf{s}}(x)$ of least degree as its {\it minimal polynomial}. Any characteristic polynomial of $\mathbf{s}$ is divisible by $m_{\mathbf{s}}(x)$. 
	
	The minimal polynomial of a sequence gives the sequence's {\it measure of predictability}. The degree $\Delta:=\Delta_{\mathbf{s}}=\deg(m_{\mathbf{s}}(x))$ is the {\it linear complexity} or the {\it linear span} of $\mathbf{s}$. Knowing any $\Delta$-tuple in $\mathbf{s}$ allows us to reconstruct $\mathbf{s}$ completely. The zero sequence has linear span $0$.
	
	The {\it reciprocal polynomial} of 
	$a(x):=1 + a_1 \, x + \ldots + a_{n-1} \,x^{n-1} + x^n $ is the polynomial 
	\begin{equation}\label{eq:reciprocal}
		a^*(x):= x^n a(x^{-1}) = 1 + a_{n-1} \, x + \ldots + a_1 \, x^{n-1} + x^n. 
	\end{equation}
	A polynomial is {\it self-reciprocal} if it is its own reciprocal. The {\it order} of $a(x)$, denoted by $\ord(a(x))$, is the least positive integer $\lambda$ for which $a(x)$ divides $x^{\lambda}-1$. The minimal polynomial of an $N$-periodic sequence has order $N$. 
	
	We will use the {\it rational fraction representation} of any $\mathbf{s}=(s_0,s_1,\ldots,s_{N-1})$. Further details can be found in \cite[Chapter 6 Section 3]{LN97}. The {\it generating function} of $\mathbf{s}$ is the element 
	\begin{equation}\label{eq:genfunc}
		s(x):= s_0 + s_1 \, x + \ldots = \sum_{i=0}^{\infty} s_i \, x^i
	\end{equation}
	in the {\it ring of formal power series} over $\mathbb{F}_2$. Since $\mathbf{s}$ is periodic, it can be represented as a rational function
	\begin{equation}\label{eq:ratio}
		s(x)= \sum_{i=0}^{\infty} s_i \, x^i = \frac{g(x)}{f(x)},
	\end{equation}
	with $f(x)$ and $g(x)$ in $\mathbb{F}_2[x]$ such that 
	\begin{equation}\label{eq:fracpoly}
		\deg(g(x)) < \deg(f(x)), ~ \gcd(g(x),f(x))=1, ~ f(0) = 1.
	\end{equation}
	The minimal polynomial $m_{\mathbf{s}}(x)$ of $\mathbf{s}$ is the reciprocal $f^*(x)$ of the denominator $f(x)$ in (\ref{eq:ratio}). The converse also holds. For any $f(x)$ and $g(x)$ in $\mathbb{F}_2[x]$ satisfying (\ref{eq:fracpoly}), there exists a periodic sequence $\mathbf{s}$ for which (\ref{eq:ratio}) holds. 
	
	The requirement that $\gcd(g(x),f(x))=1$ is not strictly necessary. Indeed, there will be occasions in the sequel that we relax this condition and allow for a rational function representation $s(x)$ with $\gcd(g(x),f(x)) = d(x) \neq 1$. The context will make it clear whether the representation is the simplest one or the more relaxed version.
	
	\begin{theorem}\label{thm:alpha}\cite[Section 2]{Chan82}
		If $\mathbf{a}$ is binary de Bruijn sequence, then the minimal polynomial of $\mathbf{a}$ has the form $a(x)=(x+1)^{z}$ for some integer $z$ satisfying $2^{n-1}+1 \leq z \leq 2^n$.
	\end{theorem}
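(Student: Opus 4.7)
The plan is to exploit the rational fraction representation from~(\ref{eq:ratio})--(\ref{eq:fracpoly}) together with the fact that the only irreducible factor of $x^{2^n}-1$ over $\mathbb{F}_2$ is $x+1$.

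First, because $\mathbf{a}$ has period $N = 2^n$, I would write its generating function as
\[
a(x) = \frac{P(x)}{1 + x^{2^n}} = \frac{P(x)}{(1+x)^{2^n}}, \qquad P(x) := \sum_{i=0}^{2^n-1} a_i\, x^i,
\]
using the identity $1+x^{2^n}=(1+x)^{2^n}$ in $\mathbb{F}_2[x]$. Reducing to lowest terms, the denominator $f(x)$ must be a divisor of $(1+x)^{2^n}$, hence $f(x)=(1+x)^{z}$ for some integer $z$ with $0\le z\le 2^n$. Since $1+x$ is self-reciprocal, the minimal polynomial is $m_{\mathbf{a}}(x) = f^{*}(x) = (1+x)^{z}$. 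The upper bound $z\le 2^n$ is then immediate. (One could even tighten this slightly by noting that a de Bruijn sequence contains exactly $2^{n-1}$ ones per period, so $P(1)=0$ and thus $(1+x)\mid P(x)$, but this is not needed for the theorem.)

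For the lower bound $z\ge 2^{n-1}+1$, equivalently $v_{1+x}(P(x)) \le 2^{n-1}-1$, I would argue by contradiction. Suppose $(1+x)^{2^{n-1}} \mid P(x)$. Using $(1+x)^{2^{n-1}} = 1 + x^{2^{n-1}}$ in $\mathbb{F}_2[x]$ and reducing $P(x)$ modulo $1+x^{2^{n-1}}$ (i.e.\ substituting $x^{2^{n-1}} \equiv 1$), this divisibility is equivalent to
\[
a_i = a_{i + 2^{n-1}} \qquad \text{for every } 0 \le i < 2^{n-1}.
\]
Hence $\mathbf{a}$ would be periodic of period dividing $2^{n-1}$. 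But a de Bruijn sequence of order $n$ contains all $2^n$ distinct binary $n$-tuples within one period of length $2^n$, so any shorter period would produce strictly fewer than $2^n$ distinct $n$-tuples, a contradiction. Therefore $(1+x)^{2^{n-1}}\nmid P(x)$, so $z \ge 2^{n-1}+1$.

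The only mildly delicate step is the lower bound: one has to remember the Frobenius identity $(1+x)^{2^k}=1+x^{2^k}$ in characteristic two so that divisibility by $(1+x)^{2^{n-1}}$ translates cleanly into a periodicity statement on the coefficient vector, and then combine this with the defining property that every binary $n$-tuple occurs exactly once in a period of $\mathbf{a}$ to force $2^n$ as the minimal period. The remaining bookkeeping on the rational fraction representation and self-reciprocity of $1+x$ is routine.
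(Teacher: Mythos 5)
Your proof is correct. The paper does not prove this statement itself (it is quoted from Chan, Games, and Key), but your argument is essentially the standard one behind the cited result: the period $2^n$ forces the minimal polynomial to divide $x^{2^n}+1=(x+1)^{2^n}$, hence the form $(x+1)^z$ with $z\le 2^n$, and the Frobenius identity $(1+x)^{2^{n-1}}=1+x^{2^{n-1}}$ turns the lower bound into the observation that otherwise the sequence would have period dividing $2^{n-1}$, which is impossible since all $2^n$ distinct $n$-tuples must appear in one period. Your parenthetical remark that $P(1)=0$ even sharpens the upper bound to $z\le 2^n-1$, consistent with the stronger bounds known for de Bruijn sequences.
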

	
	Let $\mathcal{I}(n)$ denote the number of binary irreducible polynomials of degree $n$ in $\mathbb{F}_2[x]$. Let $\mu(n)$ be the M{\"o}bius function. Gauss' general formula~\cite[Theorem 3.25]{LN97} says that 
	$\displaystyle{\mathcal{I}(n) = \frac{1}{n} \sum_{j \mid n} \mu(j) \,  2^{\frac{n}{j}}}$. Sequence A001037 in~\cite{OEIS} lists $\mathcal{I}(n)$.
	
	\begin{theorem}\cite[Theorem 2]{Mayhew1990}
		Let $\widetilde{\mathbf{s}}$ be a de Bruijn sequence of order $n \geq 4$ whose modified sequence is $\mathbf{s}$. Then 
		\begin{equation}\label{eq:possible}
			\Delta(\mathbf{s}) = \sum_{d \mid n} a_d \cdot d \mbox{, with } 0 \leq a_d \leq \mathcal{I}(d).
		\end{equation}
	\end{theorem}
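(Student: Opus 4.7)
The plan is to exploit the period of $\mathbf{s}$ to constrain $m_{\mathbf{s}}(x)$ as a divisor of $x^{2^n-1}-1$, then combine this with the standard factorization of $x^{2^n}-x$ over $\mathbb{F}_2$ to read off the claimed shape of $\Delta(\mathbf{s})$.

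First I would argue that $m_{\mathbf{s}}(x) \mid x^{2^n-1}-1$. By construction $\mathbf{s}$ is a $(2^n-1)$-periodic sequence, so the shift operator $L$ from~\eqref{eq:opL} satisfies $L^{2^n-1}\mathbf{s} = \mathbf{s}$, meaning $x^{2^n-1}-1$ is a characteristic polynomial of $\mathbf{s}$. Since every characteristic polynomial of an LFSR sequence is divisible by its minimal polynomial, the divisibility claim follows.

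Next I would invoke the classical factorization
$$x^{2^n} - x \;=\; \prod_{d \mid n}\;\prod_{p(x) \in \mathcal{I}_d} p(x),$$
where $\mathcal{I}_d$ denotes the set of monic irreducibles of degree $d$ in $\mathbb{F}_2[x]$, so $|\mathcal{I}_d|=\mathcal{I}(d)$. Dividing through by $x$ yields
$$x^{2^n-1}-1 \;=\; \prod_{d \mid n}\;\prod_{\substack{p(x) \in \mathcal{I}_d \\ p(x) \neq x}} p(x).$$
Because $\gcd(2,\,2^n-1)=1$, this polynomial has no repeated roots, hence the above factorization is square-free.

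Combining the two facts, $m_{\mathbf{s}}(x)$ must be a product of \emph{distinct} irreducibles, each of some degree $d \mid n$. Letting $a_d$ be the number of degree-$d$ irreducible factors that appear in $m_{\mathbf{s}}(x)$, we immediately obtain $0 \leq a_d \leq \mathcal{I}(d)$ (in fact, the sharper bound $a_1 \leq \mathcal{I}(1)-1 = 1$ holds because the factor $x$ is excluded, but this is subsumed by the stated bound), and summing degrees gives $\Delta(\mathbf{s}) = \sum_{d \mid n} a_d \cdot d$. The argument is essentially bookkeeping; the only substantive obstacle is confirming the two inputs already cited above, namely the divisibility property $m_{\mathbf{s}}(x) \mid x^{\per(\mathbf{s})}-1$ and the factorization of $x^{2^n}-x$ into irreducibles indexed by divisors of $n$, both of which are textbook facts recalled earlier in the paper.
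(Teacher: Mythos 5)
Your argument is correct, and since the paper simply cites this result from Mayhew and Golomb without reproducing a proof, there is nothing in the text to diverge from: the route you take (the minimal polynomial divides the characteristic polynomial $x^{2^n-1}-1$ because $2^n-1$ is a period, and $x^{2^n-1}-1=(x^{2^n}-x)/x$ is the square-free product of all monic irreducibles of degree dividing $n$ other than $x$, so $m_{\mathbf{s}}(x)$ is a product of distinct such irreducibles and its degree has the stated form) is exactly the standard bookkeeping one expects behind the citation, and every step you rely on is a textbook fact the paper itself recalls. The only remark worth adding is that the result the paper actually uses later (Section \ref{sec:poly2graph}) is slightly stronger than the quoted statement: the minimal polynomial of a modified de Bruijn sequence is a product of \emph{distinct} irreducibles of degree $d\neq 1$ dividing $n$; your proof already yields distinctness, but excluding the degree-one factor $x+1$ needs one extra observation (e.g.\ that a period of $\mathbf{s}$ contains $2^{n-1}$ ones, an even number for $n\geq 2$, or more directly that $\per(m_{\mathbf{s}})=2^n-1$ forces the structure used in Theorem \ref{teotan3}), which the statement as quoted does not demand of you.
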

	
	Let $\mathbf{s}$ be a modified de Bruijn sequence of order $n$. For small values of $n$, it is known that $n \leq \Delta_{\mathbf{s}} \leq 2^n-2$. It is conjectured in~\cite{Tan2018} that $\Delta_{\mathbf{s}} \notin \{n+1,n+2,\ldots, 3n-1\}$. If $\mathbf{s}$ is not an $m$-sequence, then we know from \cite[Corollary 3]{Wang2020} that $\Delta_{\mathbf{s}} > \frac{5}{4} n$.
	
	We now recall useful results on the rational fraction representations of modified de Bruijn sequences over $\mathbb{F}_q$ established by Tan {\it et al.} in \cite{Tan2018}. For any nonnegative integer $k$, the \textit{$k$-shifted sequence} of $\mathbf{a} = 
	(a_0,a_1,\ldots,a_{N-1} )$ is 
	\begin{equation}\label{eq:kshifted}
		L^k \mathbf{a} := 
		\left(a_k, a_{k+1}, \ldots, a_{N-1},a_0, \ldots, a_{k-1} \right).
	\end{equation}
	The set of all shifted sequences of $\mathbf{a}$ is 
	$\left\{L^k \mathbf{a} : 0 \leq k < N \right\}$.
	
	\begin{lemma}\cite[Lemma 3.5]{Tan2018}\label{teotan1}
		Let $\mathbf{a}=(a_0, a_1, \ldots, a_{N-1})$ be a given $N$-periodic sequence with rational fraction representation $\displaystyle{\frac{g(x)}{f(x)}}$. Then, for any $ 0 \leq k < N$, the rational fraction representation of its $k$-shifted sequence $L^k \mathbf{a}$ is
		\begin{equation}\label{eq:rational}
			\frac{g_k(x)}{f(x)} \mbox{, where } 
			g_k(x) := \left(g(x) \, x^{N-k} \right) \Mod{f(x)}.
		\end{equation}
	\end{lemma}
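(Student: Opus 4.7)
The plan is to work with the formal-power-series representation of $\mathbf{a}$. I would first split the series $a(x) := \sum_{i \geq 0} a_i \, x^i$ as
\[
a(x) \;=\; B_k(x) \;+\; x^k \sum_{i \geq 0} a_{i+k} \, x^i,
\]
where $B_k(x) := \sum_{j=0}^{k-1} a_j \, x^j$ collects the first $k$ terms. The trailing sum is exactly the generating function of $L^k \mathbf{a}$, so substituting $a(x) = g(x)/f(x)$ and solving gives
\[
L^k a(x) \;=\; \frac{a(x) - B_k(x)}{x^k} \;=\; \frac{g(x) - B_k(x) \, f(x)}{x^k \, f(x)}.
\]
Because $B_k(x)$ matches the first $k$ power-series coefficients of $a(x)$ and $f(0) = 1$ makes $f(x)$ a unit in $\mathbb{F}_2[[x]]$, the polynomial $g(x) - B_k(x) \, f(x)$ vanishes to order at least $k$ at $0$, hence is divisible by $x^k$ in $\mathbb{F}_2[x]$. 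Setting $\widetilde{g}_k(x) := (g(x) - B_k(x) \, f(x))/x^k$, a quick degree count yields $\deg(\widetilde{g}_k) \leq \deg(f) - 1$, so $L^k a(x) = \widetilde{g}_k(x)/f(x)$ already satisfies the size condition in (\ref{eq:fracpoly}).

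What remains is to identify $\widetilde{g}_k(x)$ with the polynomial $g_k(x) := (g(x) \, x^{N-k}) \Mod{f(x)}$ claimed in the statement. By construction, $\widetilde{g}_k(x) \cdot x^k \equiv g(x) \Mod{f(x)}$. To trade the factor $x^k$ for $x^{N-k}$, I would invoke the hypothesis that $N$ is the period of $\mathbf{a}$: the minimal polynomial $m_{\mathbf{a}}(x) = f^*(x)$ then has order $N$, so $f^*(x) \mid x^N + 1$, and taking reciprocals (which is involutive here since $f(0)=1$) gives $f(x) \mid x^N + 1$. Consequently $x^N \equiv 1 \Mod{f(x)}$, and multiplying the earlier congruence by $x^{N-k}$ produces $\widetilde{g}_k(x) \equiv g(x) \, x^{N-k} \Mod{f(x)}$. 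The degree bound $\deg(\widetilde{g}_k) < \deg(f)$ then pins $\widetilde{g}_k(x)$ down uniquely as $g_k(x)$, completing the argument.

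The only mildly delicate step is the final passage from $x^k$ to $x^{N-k}$: one has to observe that $N$-periodicity of $\mathbf{a}$ forces $f(x) \mid x^N + 1$, so that $x^k$ and $x^{N-k}$ become inverses in $\mathbb{F}_2[x]/(f(x))$. Everything else is routine generating-function bookkeeping together with the unit property of $f(x)$ in $\mathbb{F}_2[[x]]$, and I do not foresee any real obstacle beyond being careful about the order of vanishing that legitimises the division by $x^k$.
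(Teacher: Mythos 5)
Your proof is correct: the decomposition $a(x)=B_k(x)+x^k\,L^k a(x)$, the divisibility of $g(x)-B_k(x)\,f(x)$ by $x^k$ (legitimate because $f(0)=1$ makes $f$ a unit in $\mathbb{F}_2[[x]]$), the degree bound $\deg(\widetilde{g}_k)<\deg(f)$, and the exchange of $x^k$ for $x^{N-k}$ via $f(x)\mid x^N+1$ together give exactly $L^k a(x)=g_k(x)/f(x)$, and the only detail you leave unsaid, $\gcd(g_k(x),f(x))=1$, is immediate from $\gcd(g(x),f(x))=1$ and $\gcd(x,f(x))=1$. The paper does not reprove this lemma---it is imported verbatim from \cite[Lemma 3.5]{Tan2018}---and your generating-function argument is the standard one behind that result, so there is no divergence of approach to report.
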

	
	\begin{lemma}\cite[Lemma 3.6]{Tan2018}\label{teotan2}
		Let $\mathbf{a} = (a_0,a_1,\ldots,a_{2^n-2})$ be a sequence of period $2^n-1$ with  $\displaystyle{\frac{g(x)}{f(x)}}$ as its rational fraction representation. Then $\mathbf{a}$ is a modified de Bruijn sequence of order $n$ if and only if 
		\[
		\Omega =\{ (g(x) \, x^i \Mod{f(x)}) \Mod{x^n} : 0 \leq i < 2^n-1\}.
		\]
	\end{lemma}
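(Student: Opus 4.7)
The plan is to pivot on the combinatorial characterization of a modified de Bruijn sequence of order $n$: it is an $N$-periodic binary sequence with $N=2^n-1$ in which every nonzero $n$-tuple appears exactly once as a window $(a_k,a_{k+1},\ldots,a_{k+n-1})$ over one period. Because $N$ is also the total number of nonzero $n$-tuples, this is equivalent to the multiset of windows being exactly $\mathbb{F}_2^n \setminus \{\mathbf{0}\}$. My job is therefore to translate this set-equality about $n$-windows into the claimed set-equality about polynomials of degree less than $n$.

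The bridge is a bijection between binary $n$-tuples and residues modulo $x^n$, induced by the denominator $f(x)$. For any sequence with rational fraction $\frac{h(x)}{f(x)}$ satisfying the conditions in (\ref{eq:fracpoly}), the identity $h(x) = f(x) \, s(x)$ in the formal power series ring, read modulo $x^n$, couples the first $n$ terms $s_0,s_1,\ldots,s_{n-1}$ to the coefficients of $h(x) \Mod{x^n}$ through a lower-triangular $\mathbb{F}_2$-linear system with unit diagonal (because $f(0)=1$). Hence the map sending $(s_0,\ldots,s_{n-1})$ to $h(x)\Mod{x^n}$ is an invertible linear map that carries the zero tuple to the zero polynomial, and consequently puts the nonzero $n$-tuples in bijection with $\Omega$.

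I then apply this bijection to every cyclic shift. By Lemma \ref{teotan1}, the shift $L^k \mathbf{a}$ has rational fraction $\frac{g_k(x)}{f(x)}$ with $g_k(x) = (g(x)\,x^{N-k}) \Mod{f(x)}$, and its first $n$ terms are precisely the window $(a_k,a_{k+1},\ldots,a_{k+n-1})$. Under the bijection the set of all $n$-windows of $\mathbf{a}$ corresponds to $\{g_k(x) \Mod{x^n} : 0 \leq k < N\}$. To match the form in the statement I reindex by $i := N-k \Mod{N}$, which is a permutation of $\{0,1,\ldots,N-1\}$. Because the minimal polynomial $f^*(x)$ of $\mathbf{a}$ has order $N$, so does $f(x)$ (reciprocals share the order since the roots are inverses of one another), whence $x^N \equiv 1 \Mod{f(x)}$ and the two indexing schemes produce the same set $\{g(x)\,x^i \Mod{f(x)} : 0 \leq i < N\}$. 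Reducing each element further modulo $x^n$ completes the translation, and the ``if and only if'' falls out because every step is an equivalence.

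The most delicate step is the middle one: one must verify both that the $n$-tuple to polynomial map is well defined and invertible (from the triangular structure with unit diagonal) and that it preserves being nonzero, so that $\Omega$ and not the whole space of polynomials of degree less than $n$ is the correct image. Everything else is careful bookkeeping with cyclic shifts and a clean application of Lemma \ref{teotan1}.
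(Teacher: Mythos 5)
The paper offers no proof of this lemma to compare against: it is quoted directly from Tan, Xu, and Qi \cite[Lemma 3.6]{Tan2018} and used as a black box. Judged on its own, your argument is correct and is the natural derivation. The key bridge is right: from $h(x)=f(x)s(x)$ in the power series ring with $f(0)=1$, the first $n$ coefficients of the numerator are obtained from the first $n$ terms of the sequence by a unit-lower-triangular (hence invertible) linear map fixing zero, so nonzero $n$-windows correspond bijectively to elements of $\Omega$; applying this uniformly to all shifts via Lemma~\ref{teotan1}, reindexing with $i:=N-k \bmod N$, and using $x^{2^n-1}\equiv 1 \Mod{f(x)}$ (valid because the minimal polynomial $f^*(x)$ of a $(2^n-1)$-periodic sequence has order $2^n-1$ and $f$, $f^*$ share their order since $f(0)=1$) gives exactly the stated set, and the counting observation ($2^n-1$ indices versus $2^n-1$ elements of $\Omega$) turns set equality into ``each nonzero window exactly once, zero window never.'' Two points are worth making explicit rather than implicit: (i) the same triangular bijection applies to every shift because each $L^k\mathbf{a}$ has the same denominator $f(x)$ and its numerator $g_k(x)$ again satisfies (\ref{eq:fracpoly}), since $\gcd(g_k(x),f(x))=\gcd(g(x)\,x^{N-k},f(x))=1$ as $x\nmid f(x)$; (ii) you invoke without proof the combinatorial characterization that a $(2^n-1)$-periodic sequence is a modified de Bruijn sequence of order $n$ exactly when every nonzero $n$-tuple occurs once per period, which is standard but is not the paper's definition (deleting a zero from the longest run of zeros of a de Bruijn sequence), so one sentence establishing that equivalence would make the proof self-contained.
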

	Lemma~\ref{teotan2} asserts that $\mathbf{a}$ is a modified de Bruijn sequence if and only if $g_k(x) \Mod{x^n}$ traverses all nonzero polynomials of degree less than $n$ as $k$ goes from $0$ to $2^n-2$. The preparation in Lemmas \ref{teotan1} and \ref{teotan2} leads to Theorem~\ref{teotan3} in the introduction.
	
	We use graph theoretic notions commonly defined in standard textbooks. A {\it directed gaph} or a {\it digraph} is an ordered pair $G:=(V,E)$ where $V$ is a set of vertices and $E$ a set of {\it ordered} pairs called {\it directed edges} or {\it arcs}. In this work, a digraph does not have multiple arcs on the same ordered pair of vertices, altough it may contain a loop. A {\it Hamiltonian path (cycle)} is a path (cycle) that visits every vertex exactly once, with each arc traced according to its direction. We will often use the terms Hamiltonian cycle and Hamiltonian path interchangably, without causing ambiguity or losing generality.

	\section{From Polynomials to Directed Graphs}\label{sec:poly2graph}
	From the work of Mayhew and Golomb in \cite{Mayhew1990} we know that the minimal polynomial of modified binary de Bruijn sequences of order $n$ is a product of distinct irreducible polynomials of degree $d \neq 1$, with $d \mid n$. Setting aside the $m$-sequences built from primitive polynomials of degree $n$, there had not been any systematic way to determine the minimal polynomial of a given modified de Bruijn sequence. 
	
	Not all possible values given in (\ref{eq:possible}) are in fact the actual values of the linear span. For $n=5$, for instance, there is no modified binary de Bruijn sequence with minimal polynomial $f(x)=(x^5+x^2+1)(x^5+x^3+1)$ although this degree $10$ polynomial is a product of distinct irreducible polynomials of degrees dividing $5$. The degrees taken by the actual minimal polynomials for $n \in \{4,5,6\}$ are listed in Table~\ref{table:degree}. 
	
	\begin{table}[h!]
		\caption{The degrees of actual minimal polynomials of modified de Bruijn sequences of order $n \in \{4,5,6\}$, reproduced from \cite{Mayhew1990}.}\label{table:degree}
		\centering
		\begin{tabular}{c l}
			\toprule
			$n$ & Degrees of actual minimal polynomials \\
			\midrule
			$4$ & $4, 12, 14$\\
			$5$ & $5, 15, 20, 25, 30 $\\
			$6$ & $6, 27, 30, 32, 33, 35, 36, 38, 39, 41, 42, 44, 
			45, 47, 48, 50, 51, 53, 54, 56, 57, 59, 60, 62$\\
			\bottomrule		
		\end{tabular}
	\end{table}
	
	\begin{example}
		For $n=4$ there are exactly $10$ modified de Bruijn sequences having the maximal linear span $14$. Their minimal polynomial is $F(x) = 1+x+x^3+\ldots+x^{14}$. The $10$ polynomials $g(x)$ that, each, satisfies the requirements in Theorem~\ref{teotan3} with $F(x)$ taking the place of $f(x)$, are given in Table~\ref{tab:n4}. 
		Performing long division, we easily confirm that the first entry in Table~\ref{tab:n4} with $g(x)=x^{10} + x^8 + x^5 + x + 1$ has a representation
		\[
		\frac{g(x)}{F(x)} = \left(1 + x^2 + x^5 + x^6 + x^8 + x^9 + x^{10} + x^{11}\right) + x^{15} 
		\left(\frac{g(x)}{F(x)}\right),
		\]
		corresponding to the modified de Bruijn sequence $(1,0,1,0,0,1,1,0,1,1,1,1,0,0,0)$. The rest of the entries can be similarly interpreted.
	\end{example}
	
	\begin{table}[h!]
		\caption{Degree $10$ polynomials $g(x) = \sum_{j=0}^{10} g_j x^j $, written as $g_{10} \,g_9 \,\ldots \,g_1 \, g_0$, that generate $\Omega$ and their corresponding sequences.}\label{tab:n4}
		\centering
		\resizebox{\textwidth}{!}{%
			\begin{tabular}{cc | cc}
				\toprule
				$g(x)$ & Modified de Bruijn Sequence $\mathbf{s}$&
				$g(x)$ & Modified de Bruijn Sequence$\mathbf{s}$\\
				\midrule
				$10100100011$ & $(1,0,1,0,0,1,1,0,1,1,1,1,0,0,0)$ &
				$11011000101 $ & $(1, 1, 1, 1, 0, 0, 1, 0, 1, 1, 0, 1,0,0,0)$ \\
				
				$10011010111$ & $(1,0,0,1,1,1,1,0,1,0,1,1,0,0,0)$ &
				$10100011011 $ & $(1, 0, 1, 1, 0, 1, 0, 0, 1, 1, 1, 1,0,0,0)$ \\
				
				$11101011001$ & $(1,1,0,1,0,1,1,1,1,0,0,1,0,0,0)$ &
				$11010111011 $ & $(1, 0, 1, 1, 0, 0, 1, 1, 1, 1, 0, 1,0,0,0)$ \\
				
				$10001101011$ & $(1,0,1,1,1,1,0,1,0,0,1,1,0,0,0)$ &
				$11011101011 $ & $(1, 0, 1, 1, 1, 1, 0, 0, 1, 1, 0, 1,0,0,0)$ \\
				
				$11010110001$ & $(1, 1, 0, 0, 1, 0, 1, 1, 1, 1, 0, 1,0,0,0)$ &
				$11000100101$& $(1,1,1,1,0,1,1,0,0,1,0,1,0,0,0)$ \\
				\bottomrule		
			\end{tabular}
		}
	\end{table}
	
	We define a digraph (directed graph) $\Gamma_n(V,E)$, or simply $\Gamma$ when $n$ is clear from the context, based on the set $\Omega$ in (\ref{eq:omega}) as follows. We associate each nonzero polynomial 
	\[
	a(x) = a_0 + a_1 \,x + \ldots + a_{n-2} \, x^{n-2} + a_{n-1} \, x^{n-1} \in \Omega
	\]
	with the $n$-string 
	\[
	\mathbf{a} := a_{n-1},a_{n-2},\ldots,a_{1},a_0
	\]
	and its integer representation 
	\[
	A:= a_{n-1} \, 2^{n-1} + a_{n-2} \, 2^{n-2} +  \ldots + a_1 \, 2 + a_0.
	\]
	Hence, there is a one-to-one correspondence between elements in $\Omega$ and the integers in $\{1,2,\ldots,2^{n}-1\}$, which we use as the vertex set $V$. Let $a(x), \, b(x) \in \Omega$ be seen as vertices $A,B \in V$. We add {\it an arc from $A$ to $B$} if and only if
	\begin{align}
		b (x) &= x \, a(x) \Mod{x^n} \mbox{ or } \label{eq:double}\\
		b (x) &=x \, a(x) \Mod{x^n} + \sum_{i=0}^{n-1} x^i. \label{eq:complement}
	\end{align}
	
	The arc governed by (\ref{eq:double}) is from $A$ to $B:=2A \Mod{(2^n-1)}$ while the one defined by (\ref{eq:complement}) is from $A$ to $B:= (2^n-1) - (2A \Mod{(2^n-1)})$. We call the former the {\it doubling arc}, marked in blue and labelled by a $0$, and the latter the {\it double-then-complement arc}, marked in red and labelled by a $1$. For brevity, the names are abbreviated to {\it double} and {\it complement} arcs. 
	
	The outdegree of each vertex is $2$, except for the vertex $2^{n-1}$ whose outdegree is $1$ since $0 \notin \Omega$. This vertext has only a red arc to its complement vertex $2^n-1$. Each vertex has indegree $2$, except for the vertex $2^n-1$ whose only inbound edge comes from $2^{n-1}$. There is a loop from vertex $A$ to itself if and only if $3A = 2^n-1$. This vertex is clearly unique. The graph $\Gamma_n$ is simple for all $n$ such that $3 \nmid (2^n-1)$. 
	
	\begin{figure}[ht!]
		\centering
		\begin{tikzpicture}
			[
			> = stealth,
			shorten > = 1pt,
			auto,
			node distance = 1.9cm,
			semithick
			]
			
			\tikzstyle{every state}=
			\node[rectangle,fill=white,draw,rounded corners,minimum size = 4mm]
			
			\node[state,fill=lightgray] (1) {$1$};
			\node[state,fill=lightgray] (2) [right of=1] {$2$};
			\node[state,fill=lightgray] (3) [right of=2] {$3$};
			\node[state,fill=lightgray] (4) [right of=3] {$4$};
			\node[state,fill=lightgray] (5) [right of=4] {$5$};
			
			\node[state,fill=lightgray] (6) [below of=1] {$6$};
			\node[state,fill=lightgray] (7) [right of=6] {$7$};
			\node[state,fill=lightgray] (8) [right of=7] {$8$};
			\node[state,fill=lightgray] (9) [right of=8] {$9$};
			\node[state,fill=lightgray] (10) [right of=9] {$10$};
			
			\node[state,fill=lightgray] (11) [below of=6] {$11$};
			\node[state,fill=lightgray] (12) [right of=11] {$12$};
			\node[state,fill=lightgray] (13) [right of=12] {$13$};
			\node[state,fill=lightgray] (14) [right of=13] {$14$};
			\node[state,fill=lightgray] (15) [right of=14] {$15$};
			
			\path[->,blue] (1) edge node [above] {$0$} (2);
			\path[->,red] (1) edge[bend right=20] node [below] {$1$} (13);
			
			\path[->,blue] (2) edge[bend left=20] node [above] {$0$} (4);
			\path[->,red] (2) edge node [below] {$1$} (11);
			
			\path[->,blue] (3) edge[bend right=15] node [below] {$0$} (6);
			\path[->,red] (3) edge node [above] {$1$} (9);
			
			\path[->,blue] (4) edge node [below] {$0$} (8);
			\path[->,red] (4) edge node [above] {$1$} (7);
			
			\path[->,blue] (5) edge node [right] {$0$} (10);
			\path[->,red] (5) edge[loop left] node [above] {$1$} (5);
			
			\path[->,blue] (6) edge[bend right=10] node [below] {$0$} (12);
			\path[->,red] (6) edge[bend right=15] node [above] {$1$} (3);
			
			\path[->,blue] (7) edge[bend left=5] node [below] {$0$} (14);
			\path[->,red] (7) edge node [left] {$1$} (1);
			
			\path[->,red] (8) edge[bend left=5] node [right] {$1$}(15);
			
			\path[->,blue] (9) edge node [below] {$0$}(2);
			\path[->,red] (9) edge node [left] {$1$}(13);
			
			\path[->,blue] (10) edge node [below] {$0$}(4);
			\path[->,red] (10) edge[bend right=3] node [right] {$1$}(11);
			
			\path[->,blue] (11) edge node[left]{$0$}(6);
			\path[->,red] (11) edge[bend left=5] node[above]{$1$} (9);
			
			\path[->,blue] (12) edge[bend left=10] node[below]{$0$}(8);
			\path[->,red] (12) edge node[left]{$1$}(7);
			
			\path[->,blue] (13) edge[bend right=5] node[left]{$0$}(10);
			\path[->,red] (13) edge[bend right=15] node[below]{$1$} (5);
			
			\path[->,blue] (14) edge[bend left=20] node[below]{$0$}(12);
			\path[->,red] (14) edge node[left]{$1$}(3);
			
			\path[->,blue] (15) edge node[below]{$0$} (14);
			\path[->,red] (15) edge[bend left=10] node[left]{$1$}(1);
			
		\end{tikzpicture}
		
		\vspace{0.5cm}
		\begin{tikzpicture}
			[
			> = stealth,
			shorten > = 1pt,
			auto,
			node distance = 1.9cm,
			semithick
			]
			
			\tikzstyle{every state}=
			\node[rectangle,fill=white,draw,rounded corners,minimum size = 4mm]
			
			\node[state,fill=lightgray] (1) {$1$};
			\node[state,fill=lightgray] (2) [right of=1] {$2$};
			\node[state,fill=lightgray] (3) [right of=2] {$3$};
			\node[state,fill=lightgray] (4) [right of=3] {$4$};
			\node[state,fill=lightgray] (5) [right of=4] {$5$};
			
			\node[state,fill=lightgray] (6) [below of=1] {$6$};
			\node[state,fill=lightgray] (7) [right of=6] {$7$};
			\node[state,fill=lightgray] (8) [right of=7] {$8$};
			\node[state,fill=lightgray] (9) [right of=8] {$9$};
			\node[state,fill=lightgray] (10) [right of=9] {$10$};
			
			\node[state,fill=lightgray] (11) [below of=6] {$11$};
			\node[state,fill=lightgray] (12) [right of=11] {$12$};
			\node[state,fill=lightgray] (13) [right of=12] {$13$};
			\node[state,fill=lightgray] (14) [right of=13] {$14$};
			\node[state,fill=lightgray] (15) [right of=14] {$15$};
			
			\path[->,blue] (1) edge node[above]{$0$}(2);
			
			\path[->,blue] (2) edge[bend left=20] node[above]{$0$}(4);
			
			\path[->,red] (3) edge node[above]{$1$} (9);
			
			\path[->,blue] (4) edge node[below]{$0$}(8);
			
			\path[->,blue] (5) edge node[right]{$0$}(10);
			
			\path[->,blue] (6) edge[bend right=10] node[below]{$0$}(12);
			
			\path[->,red] (7) edge node[left]{$1$}(1);
			
			\path[->,red] (8) edge[bend left=5] node[right]{$1$}(15);
			
			\path[->,red] (9) edge node[left]{$1$}(13);
			
			\path[->,red] (10) edge[bend right=3] node[above]{$1$}(11);
			
			\path[->,blue] (11) edge node[left]{$0$} (6);
			
			\path[->,red] (12) edge node[left]{$1$}(7);
			
			\path[->,red] (13) edge[bend right=15] node[below]{$1$}(5);
			
			\path[->,red] (14) edge node[left]{$1$} (3);
			
			\path[->,blue] (15) edge node[below]{$0$}(14);
			
		\end{tikzpicture}
		\caption{{\bf Top}: The graph $\Gamma_4$ with blue arcs based on (\ref{eq:double}) and red arcs based on (\ref{eq:complement}). {\bf Bottom}: A Hamiltonian cycle $\mathcal{H}$ in $\Gamma_4$.}\label{fig:Graph}
	\end{figure}
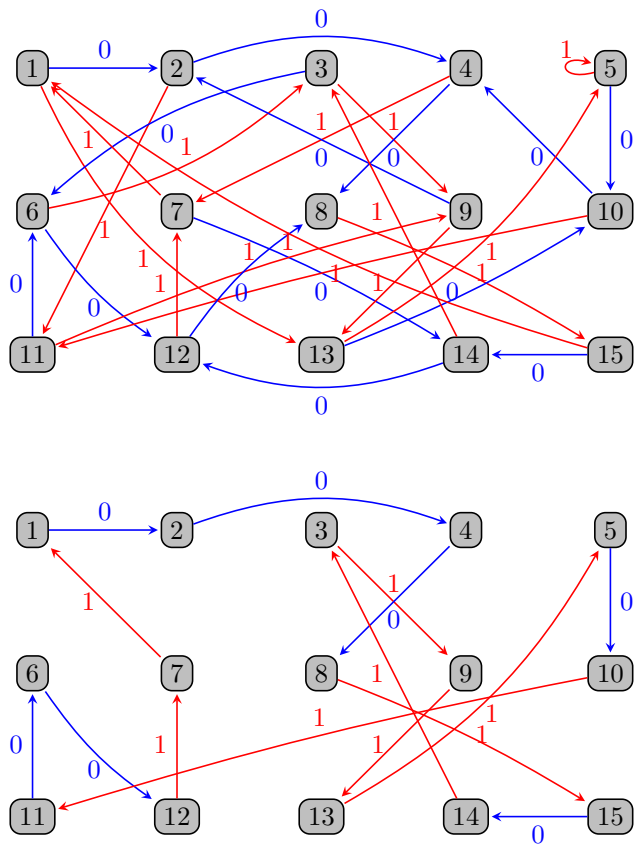
	
	\begin{example}
		The graph $\Gamma_4$ is in Figure~\ref{fig:Graph} {\bf Top}. The loop is from $A=5$ to itself.
	\end{example}
	
	\begin{definition}\label{def:walk}
		Let $\displaystyle{F(x) := \sum_{i=0}^{2^n-2}x^i}$ as in (\ref{eq:main}). The indexed set 
		\[
		W_g := \{(x^i \, g(x) \Mod{F(x)}) \Mod{x^n}~:~i \geq 0\}
		\]
		forms a walk generated by $g(x)$ and we say that $g(x)$ generates the walk $W_g$. 
	\end{definition}
	
	Our task is to identify polynomials $g(x)$, with $g(0) = 1$ and $\gcd(g(x),F(x))=d(x)$, that generate Hamiltonian cycles in $\Gamma_n$. Once such a $g(x)$ is found, Theorem \ref{teotan3} concludes that the reciprocal polynomial $f^{*}(x)$ of $\displaystyle{f(x) := \frac{F(x)}{d(x)}}$ is the minimal polynomial of a modified binary de Bruijn sequence. 
	
	\section{Hamiltonian Cycles in $\Gamma_n$}\label{sec:hamiltonian}
	
	Deciding whether a directed graph is Hamiltonian is hard. A survey on this topic was done by K\"uhn and Osthus in~\cite{Kuhn2012}. Further references and discussion can be found in \cite[Section 6.1]{BangJensen2009}. Fortunately, $\Gamma_n$ has many nice properties that allow us to explicitly determine some Hamiltonian cycles for all $n$. 
	
	\subsection{Hamiltonian Cycles by Two Greedy Algorithms}\label{subsec:greedy}
	
	Inspired by some greedy algorithms in the construction of certain classes of de Bruijn sequences discussed by Chang, Ezerman, and Fahreza in \cite{Chang2020}, we devise two basic algorithms. Algorithm \ref{alg:complement} \emph{prefers the complement over the double arcs} when moving from the current vertex to the next vertex. Algorithm \ref{alg:double} \emph{swaps the preference}, with a modification imposed to avoid the inclusion of $0$, since $0 \notin \Omega$.
	
	\begin{algorithm}[h!]
		\caption{{\tt Prefer Complement}}
		\label{alg:complement}
		\begin{algorithmic}[1]
			\renewcommand{\algorithmicrequire}{\textbf{Input:}}
			\renewcommand{\algorithmicensure}{\textbf{Output:}}
			\Require The order $n$ of the sequences.
			\Ensure Paths in $\Gamma_n$.
			\For{$i \in \{1,2,\ldots,2^n-1\}$} 
			\State{Initiate the indexed set $S=\{i\}$}\Comment{$v_{\rm init}= i$}
			\State{$j \gets$ element in $S$ with largest index}
			\State{$d \gets 2j \Mod{2^n}$}
			\State{$c \gets 2^n-1-d$}
			\While{$c \notin S$ or $d \notin S$}
			\If{$c \notin S$}
			\State{Append $c$ to $S$}
			\Else
			\State{Append $d$ to $S$}
			\EndIf
			\EndWhile
			\State{Path is $(s_1,s_2), \, (s_2,s_3), \, \ldots,\, (s_{\ell-1},s_{\ell})$, where $S:=\{s_1=i,s_2,\ldots,s_{\ell-1},s_{\ell}\}$.}
			\EndFor
		\end{algorithmic}
	\end{algorithm}
	
	\begin{algorithm}[h!]
		\caption{{\tt Modified Prefer Double}}
		\label{alg:double}
		\begin{algorithmic}[1]
			\renewcommand{\algorithmicrequire}{\textbf{Input:}}
			\renewcommand{\algorithmicensure}{\textbf{Output:}}
			\Require The order $n$ of the sequences.
			\Ensure Paths in $\Gamma_n$.
			\For{$i \in \{1,2,\ldots,2^n-1\}$} 
			\State{Initiate indexed set $S=\{i\}$}\Comment{$v_{\rm init}= i$}
			\State{$j \gets$ element in $S$ with largest index}
			\State{$d \gets 2j \Mod{2^n}$}
			\State{$c \gets 2^n-1-d$}
			\While{$c \notin S$ or $d \notin S$}
			\If{$d =0 $ and $c \notin S$}
			\State{Append $c$ to $S$}
			\Else
			\If{$d \notin S$}
			\State{Append $d$ to $S$}
			\Else
			\State{Append $c$ to $S$}
			\EndIf
			\EndIf
			\EndWhile
			\State{Path is $(s_1,s_2), \, (s_2,s_3), \, \ldots,\, (s_{\ell-1},s_{\ell})$, where $S:=\{s_1=i,s_2,\ldots,s_{\ell-1},s_{\ell}\}$.}
			\EndFor
		\end{algorithmic}
	\end{algorithm}
	
	The two algorithms produce paths, starting from an initial vertex $v_{\rm init}$. While each initial vertex produces a path, only several of them lead to Hamiltonian cycles. Table~\ref{table:success} lists the Hamiltonian cycles produced for $n \in \{4,5,6\}$. 
	
	For a given $n \geq 4$, Algorithm \ref{alg:complement} yields $n-1$ distinct, that is, cyclically-inequivalent, Hamiltonian cycles: $n-2$ of them have two distinct initial vertices, whereas $1$ cycle has $3$ possible initial vertices. Let $1 \leq j < n$. Then $v_{\rm init} \in \{2^j-1,2^n - 2^{j-1}\}$ for each $j$. When $v_{\rm init} = 2^{n-1}$ the generated cycle is the same as the one produced when $j=1$, that is, with $v_{\rm init} \in \{1,\,2^{n}-1\}$. This is because the only arc from $2^{n-1}$ is to $2^n-1$ and the complement of the double of $2^n-1$ is $1= (2^{n}-1)-(2^n - 2)$. Algorithm \ref{alg:double} produces a unique Hamiltonian cycle. It occurs if and only if the initial vertex is either $\floor{(2^n-1)/3}$ or its complement $(2^n-1)-\floor{(2^n-1)/3}$. 
	
	\begin{example}
		For $n=4$, Algorithm~\ref{alg:complement} yields the Hamiltonian cycle $\widetilde{\mathcal{H}}$ in Figure~\ref{fig:ham} on $v_{\rm init}=1$. The path consists of arcs
		\begin{align*}
			&(1, 13),\, (13,5),\, (5,10),\, (10,11),\, (11,9),\, (9,2),\, (2,4),\\
			&(4,7),\,(7,14), \,(14,3),\, (3,6),\, (6,12),\, (12,8),\, (8,15).
		\end{align*}
		
		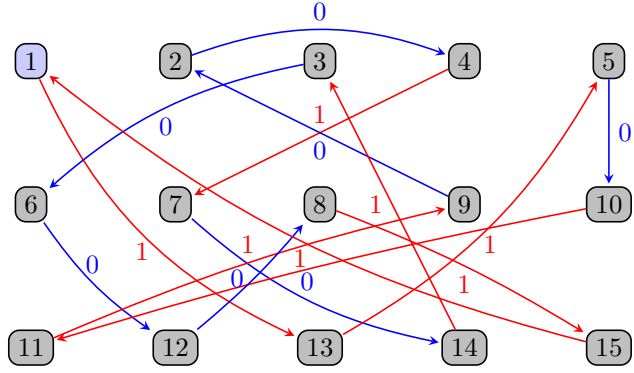
\begin{figure}
			\centering
			\begin{tikzpicture}
				[
				> = stealth,
				shorten > = 1pt,
				auto,
				node distance = 1.9cm,
				semithick
				]
				
				\tikzstyle{every state}=
				\node[rectangle,fill=white,draw,rounded corners,minimum size = 4mm]
				
				\node[state,fill=blue!20] (1) {$1$};
				\node[state,fill=lightgray] (2) [right of=1] {$2$};
				\node[state,fill=lightgray] (3) [right of=2] {$3$};
				\node[state,fill=lightgray] (4) [right of=3] {$4$};
				\node[state,fill=lightgray] (5) [right of=4] {$5$};
				
				\node[state,fill=lightgray] (6) [below of=1] {$6$};
				\node[state,fill=lightgray] (7) [right of=6] {$7$};
				\node[state,fill=lightgray] (8) [right of=7] {$8$};
				\node[state,fill=lightgray] (9) [right of=8] {$9$};
				\node[state,fill=lightgray] (10) [right of=9] {$10$};
				
				\node[state,fill=lightgray] (11) [below of=6] {$11$};
				\node[state,fill=lightgray] (12) [right of=11] {$12$};
				\node[state,fill=lightgray] (13) [right of=12] {$13$};
				\node[state,fill=lightgray] (14) [right of=13] {$14$};
				\node[state,fill=lightgray] (15) [right of=14] {$15$};
				
				\path[->,red] (1) edge[bend right=20] node[below]{$1$} (13);
				\path[->,blue] (2) edge[bend left=20] node[above]{$0$}(4);
				
				\path[->,blue] (3) edge[bend right=15] node[below]{$0$}(6);
				\path[->,red] (4) edge node[above]{$1$} (7);
				
				\path[->,blue] (5) edge node[right]{$0$}(10);
				
				\path[->,blue] (6) edge[bend right=10] node[above]{$0$} (12);
				
				\path[->,blue] (7) edge[bend right=15] node[above]{$0$} (14);
				
				\path[->,red] (8) edge[bend left=5] node[below]{$1$} (15);
				
				\path[->,blue] (9) edge node[below]{$0$} (2);
				
				\path[->,red] (10) edge[bend right=3] (11);
				
				\path[->,red] (11) edge[bend left=5] node[above]{$1$} (9);
				
				\path[->,blue] (12) edge[bend right=5] node[left]{$0$} (8);
				
				\path[->,red] (13) edge[bend right=15] node[below]{$1$} (5);
				
				\path[->,red] (14) edge node[left]{$1$} (3);
				
				\path[->,red] (15) edge[bend left=12] node[below]{$1$} (1);
				
			\end{tikzpicture} 
			\caption{The Hamiltonian cycle $\widehat{\mathcal{H}}$ in $\Gamma_4$ produced by Algorithm \ref{alg:complement} on $v_{\rm init}=1$. It is generated by $g(x)=x^{10}+x^8 + x^5 + x + 1$. The resulting sequence $(0, 0, 0, 1,  1, 1, 1, 0, 1, 1, 0, 0, 1, 0,1)$ has linear complexity $14$, which is maximal.}\label{fig:ham}
		\end{figure}
	\end{example}
	
	\begin{table*}[t!]
		\caption{Hamiltonian Cycles constructed by Algorithms \ref{alg:complement} and \ref{alg:double} for $n \in \{4,5,6\}$.}
		\label{table:success}
		\centering
		\begin{tabular}{ccl}
			\toprule
			$n$ & $v_{\rm init}$ & \multicolumn{1}{c}{The resulting Hamiltonian cycle} \\ 
		\midrule
		\multicolumn{3}{c}{Algorithm \ref{alg:complement} : {\tt Prefer Complement}} \\
		\midrule
		$4$ & $1,15,8$ & $(1, 13, 5, 10, 11, 9, 2, 4, 7, 14, 3, 6, 12, 8, 15)$ \\
		& $3,14$ & $(3, 9, 13, 5, 10, 11, 6, 12, 7, 1, 2, 4, 8, 15, 14)$ \\
		& $7,12$ & $(7, 1, 13, 5, 10, 11, 9, 2, 4, 8, 15, 14, 3, 6, 12)$ \\
		
		$5$ & $1,31,16$ & $(1, 29, 5, 21, 10, 11, 9, 13, 26, 20, 23, 17, 2, 27, 22,$\\
		&& $~19, 25, 18, 4, 8, 15,30, 3, 6, 12, 7, 14, 28, 24, 16, 31)$ \\
		& $3,30$ & $(3, 25, 13, 5, 21, 10, 11, 9, 18, 27, 22, 19, 6, 12, 7,$\\
		&& $~17, 29, 26, 20, 23,
		14, 28, 24, 15, 1, 2, 4, 8, 16, 31, 30)$ \\
		& $7,28$ & $(7, 17, 29, 5, 21, 10, 11, 9, 13, 26, 20, 23, 14, 3, 25,$\\
		&& $~18, 27, 22, 19, 6,
		12, 24, 15, 1, 2, 4, 8, 16, 31, 30, 28)$ \\
		& $15,24$ & $(15, 1, 29, 5, 21, 10, 11, 9, 13, 26, 20, 23, 17, 2, 27,$\\
		&& $~22, 19, 25, 18, 4, 8,16, 31, 30, 3, 6, 12, 7, 14, 28, 24)$ \\
		
		$6$ & $1,63,32$ & $(1, 61, 5, 53, 21, 42, 43, 41, 45, 37, 10, 20, 23, 17, 29, 58, 11, 22, 19, 25,$\\
		&&$~13, 26, 52, 40, 47, 33, 2, 59, 9, 18, 27, 54, 44, 39, 49, 34,4, 55, 46, 35, 57,$\\
		&&$~50, 36, 8, 16, 31, 62, 3, 6, 51, 38, 12, 24, 15, 30, 60, 7, 14, 28, 56, 48, 32, 63)$ \\
		
		& $3,62$ & $(3, 57, 13, 37, 53, 21, 42, 43, 41, 45, 26, 11, 22, 19, 25, 50, 
		27, 9, 18, 36,$\\
		&& $~55, 17, 29, 5, 10, 20, 23, 46, 35, 6, 51, 38,
		12, 39, 49, 34, 59, 54, 44, 24, 15, $\\
		&&$~33, 61, 58, 52, 40, 47, 30, 60, 7, 14, 28, 56, 48, 31, 1, 2, 4, 8, 16, 32, 63, 62)$ \\
		
		& $7,60$ & $(7, 49, 29, 5, 53, 21, 42, 43, 41, 45, 37, 10, 20, 23, 17, 34, 59, 9, 18, 27, 54,$\\
		&& $~19, 25, 13, 26, 11, 22, 44, 39, 14, 35, 57, 50, 36, 55, 46, 28, 56, 15, 33,61, $\\
		&& $~58, 52, 40, 47, 30, 3, 6, 51, 38, 12, 24, 48, 31, 1, 2, 4, 8, 16, 32, 63, 62, 60)$\\
		
		& $15,56$ & $(15, 33, 61, 5, 53, 21, 42, 43, 41, 45, 37, 10, 20, 23, 17, 29, 58, 11, 22,
		19,$\\
		&&$~25, 13, 26, 52, 40, 47, 30, 3, 57, 50, 27, 9, 18,36, 55, 46, 35, 6, 51, 38,12,$\\
		&&$~39, 49, 34, 59, 54, 44, 24, 48, 31, 1, 2, 4, 8, 16, 32, 63, 62, 60, 7, 14, 28, 56)$ \\
		
		& $31,48$ & $(31, 1, 61, 5, 53, 21, 42, 43, 41, 45, 37, 10, 20, 23, 17, 29, 58, 11, 22, 19,25,$\\
		&&$~13, 26, 52, 40, 47, 33, 2, 59, 9, 18, 27,54, 44, 39, 49,34, 4, 55, 46, 35, 57$\\
		&&$~50, 36, 8, 16, 32, 63, 62, 3, 6, 51, 38, 12, 24, 15, 30, 60, 7, 14, 28, 56, 48)$ \\
		
		\bottomrule		
		\multicolumn{3}{c}{Algorithm \ref{alg:complement} : {\tt Modified Prefer Double}} \\
		\midrule
		$4$ & $5,10$ & $(5, 10, 4, 8, 15, 14, 12, 7, 1, 2, 11, 6, 3, 9, 13)$ \\
		$5$ & $10,21$ & $(10, 20, 8, 16, 31, 30, 28, 24, 15, 1, 2, 4, 23, 14, 3, $\\
		&& $~6,
		12, 7, 17, 29, 26, 11, 22, 19, 25, 18, 27, 9, 13, 5, 21)$ \\
		$6$ & $21,42$ & $(21, 42, 20, 40, 16, 32, 63, 62, 60, 56, 48, 31, 1, 2, 4, 8,47,30,3, 6, 12, 24, $\\
		&& $~15,33, 61, 58, 52, 23,46, 28, 7, 14, 35, 57, 50, 36, 55, 17, 34, 59, 54,44, 39,$\\
		&& $~49, 29, 5, 10, 43, 22, 19, 38, 51, 25,
		13, 26, 11, 41, 18, 27, 9, 45, 37, 53)$ \\		
		\bottomrule
	\end{tabular}
\end{table*}

For brevity we will only prove, in Theorem \ref{thm:fullcycle}, that Algorithm \ref{alg:complement} on $v_{\rm init}=1$ always yields a Hamiltonian cycle. The respective proofs for the other valid initial vertices and on Algorithm \ref{alg:double} when using the two specified initial vertices follow a similar line of reasoning. To identify Hamiltonian cycles beyond those produced by the two algorithms, we establish a general result on the paths produced by Algorithm \ref{alg:complement}. The result will be used in the next subsection to identify many more Hamiltonian cycles in $\Gamma_n$ by cycle joining.

\begin{theorem}\label{prefercom}
	Let $V(\Gamma_n) := \{1,\ldots,2^n-1\}$. Given an indexed set whose elements are vertices in $\Gamma_n$ in the form of 
	\begin{equation}\label{eq:Theta}
		\Theta := \{\alpha_1, \, \ldots, \, \alpha_{2^n-1} \; : \; \alpha_i \in V \mbox{ and } \alpha_i \ne \alpha_j \mbox{ for } i \ne j\}, 
	\end{equation}
	let $\Psi$ be a mapping on $\Theta$ defined by, for $1 \leq i < 2^n$,
	\begin{equation}\label{eq:Psi}
		\Psi: \alpha_i \mapsto 
		\begin{cases} 
			\beta_i:=(2^n-1)- (2 \alpha_i \Mod{2^n}) 
			\mbox{, if } \alpha_i \mbox{ had not been mapped to } \beta_i,\\
			2 \alpha_i \Mod{2^n} \mbox{, otherwise.} 
		\end{cases}
	\end{equation}
	Then $\Psi$ is a permutation on $\Theta$.
\end{theorem}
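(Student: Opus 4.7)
The plan is to show that $\Psi$ is injective on the finite set $\Theta$; since $\Theta$ has $2^n-1$ elements and $\Psi$ is an endomap, injectivity will imply bijectivity, which is the permutation claim. The whole argument hinges on a single parity observation: for every $\alpha_i$, the complement candidate $\beta_i=(2^n-1)-(2\alpha_i \bmod 2^n)$ is always \emph{odd}, while the double candidate $d_i:=2\alpha_i \bmod 2^n$ is always \emph{even}. Consequently, if two images under $\Psi$ coincide, they must have been produced by the same branch of~(\ref{eq:Psi}).

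Before attacking injectivity, well-definedness deserves a brief check. The only vertex whose double falls outside $V$ is $\alpha_i=2^{n-1}$, whose only legal image is $\beta_i=2^n-1$. Conveniently, $\beta_j=2^n-1$ forces $d_j=0$ and hence $\alpha_j=2^{n-1}$, so no other vertex can have claimed $2^n-1$ before $2^{n-1}$ is processed, and the rule in~(\ref{eq:Psi}) always outputs an element of $\Theta$.

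For the main argument, I would suppose $\Psi(\alpha_i)=\Psi(\alpha_j)$ with $i<j$. Parity forces both images to be complements, or both to be doubles. In the complement sub-case, $\Psi(\alpha_j)=\beta_j$ requires $\beta_j$ to be unassigned just before step $j$, contradicting the earlier assignment $\Psi(\alpha_i)=\beta_j$ at step $i$. In the double sub-case, $d_i=d_j$ yields $\alpha_i\equiv\alpha_j\pmod{2^{n-1}}$, so $\{\alpha_i,\alpha_j\}=\{a,a+2^{n-1}\}$ for some $a$ and, in particular, $\beta_i=\beta_j$. Since $\Psi(\alpha_i)=d_i$, the rule forces some earlier $\alpha_k$ with $k<i$ to satisfy $\Psi(\alpha_k)=\beta_i$; being odd, this value must have come from the complement branch, giving $\beta_k=\beta_i$ and hence $d_k=d_i$. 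But the equation $2x\equiv d_i\pmod{2^n}$ has only the two solutions $x\in\{a,a+2^{n-1}\}=\{\alpha_i,\alpha_j\}$ inside $V$, contradicting the distinctness of indices required in~(\ref{eq:Theta}).

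The main obstacle is purely bookkeeping: tracking which candidate is consumed at each step and correctly invoking the sequential conditional in~(\ref{eq:Psi}). The parity observation collapses all the algebra, so no heavy computation is needed; the only delicate point is the simultaneous consistency of the witnesses in the double-versus-double sub-case, which the argument resolves by chasing them back to a common complement predecessor that must coincide with $\alpha_i$ or $\alpha_j$.
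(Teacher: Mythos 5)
Your proposal is correct and takes essentially the same route as the paper: both hinge on the observation that the odd candidate $\beta_i=(2^n-1)-(2\alpha_i \bmod 2^n)$ and the even candidate $2\alpha_i \bmod 2^n$ can only be claimed by the complementary pair $\{a,\,a+2^{n-1}\}$, with the earlier-indexed member taking the complement branch, which yields injectivity. The only minor difference is that you conclude surjectivity by the finite-endomap pigeonhole (after checking well-definedness at $\alpha=2^{n-1}$), whereas the paper exhibits explicit preimages for even and odd targets; both are valid, and your double-versus-double bookkeeping is in fact more complete than the paper's terse injectivity step.
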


\begin{proof}
	We show that $\Psi$ is a bijection on $\Theta$. It is immediate to verify that there exist indices $1 \leq k \neq \ell < 2^n$ such that $\alpha_{\ell}=(\alpha_k + 2^{n-1}) \Mod{2^n}$. If $k < \ell$, then 
	\begin{equation}
		\Psi(\alpha_k)= (2^n-1) - (2 \alpha_k \Mod{2^n}) 
		\neq \Psi(\alpha_{\ell}) = 2 \alpha_k \Mod{2^n}.
	\end{equation}
	If $\ell < k$, then 
	\begin{equation}
		\Psi(\alpha_{\ell})= (2^n-1) - (2 \alpha_k \Mod{2^n}) 
		\neq \Psi(\alpha_k) = 2 \alpha_k \Mod{2^n}.
	\end{equation}
	Thus, $\Psi$ is injective. 
	
	Let $\alpha$ be an arbitrarily selected element of $\Theta$. If $\alpha$ is even, then there exists an integer $k \in \left\{1, 2,\ldots, \frac{2^n-2}{2} \right\}$ such that $\alpha =2k$. Hence, either $\Psi(k) = \alpha$ or $\Psi(k + 2^{n-1}) = \alpha$. If $\alpha$ is odd, then there exists an integer 
	$t \in \left\{0,1, 2,\ldots, \frac{2^n-2}{2} \right\}$ such that $\alpha = 2t + 1$. Then, either $\Psi\left(\frac{2^n-2t-2}{2}\right)=\alpha$ or $\Psi\left(\frac{2^n-2t-2}{2} + 2^{n-1}\right)= \alpha$. We conclude that $\Psi$ is surjective and the proof is now complete.
\end{proof}
Since the function $\Psi$ in (\ref{eq:Psi}) is a permutation on a finite set $\Theta$, then $\Psi$ can be written as a composition of $j$ disjoint cycles 
\begin{equation}\label{eq:disj}
	\Psi = \mathcal{C}_1 \circ \mathcal{C}_2 \circ \cdots \circ \mathcal{C}_j. 
\end{equation}

\begin{definition}\label{def:cycle}
	Let $\left(\alpha_{k,1},\alpha_{k,2},\ldots,\alpha_{k,\ell}\right)$ be a cycle $\mathcal{C}_k$ generated by $\Psi$. We say that $\mathcal{C}_k$ \emph{starts} at $\alpha_{k,1}$ and \emph{ends} at $\alpha_{k,\ell}$ since, by then, both possible images $\Psi(\alpha_{k,\ell})$, one of which being $\alpha_{k,1}$, have all appeared. The elements $\alpha_{k,1}$ and $\alpha_{k,\ell}$ are, respectively, the {\it starting element} and the {\it terminating element} of $\mathcal{C}_k$.
\end{definition}

\begin{lemma}\label{lemma:ordering}
	Let the indexed set $\Theta$ and the function $\Psi$ be as defined in Theorem \ref{prefercom}. Let $\gamma \in \Theta$ be an even number which is not a starting element in any cycle. Then the complement $2^n-1- \gamma$ of $\gamma$ is an odd number that occurs before $\gamma$ in the said cycle.
\end{lemma}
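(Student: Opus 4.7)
The plan is to locate $\gamma$'s immediate $\Psi$-predecessor $\alpha$ inside $\gamma$'s cycle, use parity to determine which branch of $\Psi$ produced $\gamma$, and then identify the earlier element of $\Theta$ that forced that branch. Since $\gamma$ is not a starting element of any cycle $\mathcal{C}_k$, such an $\alpha$ exists in the same cycle and sits at the $\Theta$-position immediately before $\gamma$. The complement branch of $\Psi$ in (\ref{eq:Psi}) always outputs an odd value, since $2^n-1$ is odd and $(2x \Mod{2^n})$ is even, so the evenness of $\gamma$ forces $\Psi(\alpha) = 2\alpha \Mod{2^n} = \gamma$; that is, $\alpha$ was sent to $\gamma$ through the double branch.

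Next, the definition of $\Psi$ takes the double branch at $\alpha$ only when the complement candidate $\beta_\alpha = (2^n-1) - (2\alpha \Mod{2^n}) = 2^n-1-\gamma$ has already been the $\Psi$-image of some $\alpha'$ preceding $\alpha$ in the $\Theta$-order; by the injectivity of $\Psi$ guaranteed in Theorem \ref{prefercom}, such an $\alpha' \neq \alpha$ is forced to exist. Because $2^n-1-\gamma$ is odd while the double branch always produces an even output, the earlier assignment $\Psi(\alpha') = 2^n-1-\gamma$ itself came from the complement branch, yielding $2\alpha' \Mod{2^n} = \gamma$. Combined with $2\alpha \Mod{2^n} = \gamma$, this pins down the unordered pair $\{\alpha, \alpha'\}$ as the two solutions of $2x \equiv \gamma \pmod{2^n}$ in $V(\Gamma_n)$, namely $\{\gamma/2,\ \gamma/2 + 2^{n-1}\}$, with $\alpha'$ preceding $\alpha$ in $\Theta$.

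To close the argument, use the ordering on $\Theta$ described in Definition \ref{def:cycle}: each cycle $\mathcal{C}_k$ appears consecutively in $\Psi$-order, from its starting element $\alpha_{k,1}$ to its terminating element $\alpha_{k,\ell}$. Consequently the $\Theta$-position of $\Psi(x)$ equals that of $x$ plus one whenever $x$ is not the terminating element of its cycle, and otherwise equals the (earlier) starting position of $x$'s cycle. Applying this to $\alpha'$ and using $\alpha' < \alpha$ in $\Theta$-order yields
\[
\mathrm{pos}(2^n-1-\gamma) \;=\; \mathrm{pos}(\Psi(\alpha')) \;\leq\; \mathrm{pos}(\alpha')+1 \;\leq\; \mathrm{pos}(\alpha) \;=\; \mathrm{pos}(\gamma) - 1,
\]
so $2^n-1-\gamma$ is an odd element appearing strictly before $\gamma$ in the cycle. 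The main obstacle is reading the ``otherwise'' clause in (\ref{eq:Psi}) correctly as ``the complement target $\beta_\alpha$ has already been produced by some earlier $\alpha'$'': only under this reading is the required earlier $\alpha'$ with $\Psi(\alpha') = 2^n-1-\gamma$ forced to exist, after which the positional inequality above follows immediately from how $\Theta$ lays out each $\Psi$-cycle in $\Psi$-order.
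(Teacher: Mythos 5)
Your parity step is sound and it is the same engine that drives the paper's one-sentence proof: an even $\gamma$ can only be entered through the double branch, and the double branch is only taken when the complement $2^n-1-\gamma$ is already ``used''. The genuine gap is in how you formalize ``used''. You read the otherwise-clause of (\ref{eq:Psi}) as ``$2^n-1-\gamma$ has already been the $\Psi$-image of some $\alpha'$ preceding $\alpha$'', and you explicitly concede your proof needs exactly this reading. But the condition the paper actually operates with --- see Algorithm~\ref{alg:complement}, Definition~\ref{def:cycle} (``both possible images \ldots have all appeared''), and the proof of Theorem~\ref{thm:fullcycle} --- is that the complement has already \emph{appeared} in the traversal, and the initial vertex of a cycle counts as having appeared even though it becomes an image only of the terminating element, at the very end. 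Take $n=4$ and the cycle $(1,13,5,10,11,9,2,4,7,14,3,6,12,8,15)$ from Table~\ref{table:success}: $\gamma=14=2^4-2$ is even and not the starting element; its predecessor $7$ takes the double arc precisely because the complement $1$ is the initial vertex, yet the unique $\Psi$-preimage of $1$ is the terminating element $15$, which comes \emph{after} $7$. So your $\alpha'$ does not exist, and the chain $\mathrm{pos}(\Psi(\alpha'))\leq \mathrm{pos}(\alpha')+1\leq \mathrm{pos}(\alpha)$ has nothing to start from. Worse, if your reading were adopted as the definition of $\Psi$, then at vertex $7$ the complement branch would still be available ($1$ is not yet an image), giving $\Psi(7)=1$ and closing a $9$-cycle --- contradicting Theorem~\ref{thm:fullcycle} and the double arc $(7,14)$ in Figure~\ref{fig:ham}.

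The repair is to argue with ``appeared'' directly, which is what the paper's terse proof does implicitly: since $\gamma$ is not a starting element, its within-cycle predecessor $\alpha$ exists and, by your parity observation, $\Psi(\alpha)=2\alpha \bmod 2^n=\gamma$; this branch is taken only because the odd number $2^n-1-\gamma$ had already appeared at that moment --- as the image of an earlier element, as the initial vertex of the current or an earlier cycle, or (when $\alpha$ is the loop vertex with $3\alpha=2^n-1$) as $\alpha$ itself --- and in every case it occupies a strictly earlier position than $\gamma$. Note also that both your argument and this repaired one only yield ``earlier in the processing order'', not literally ``earlier in the said cycle'': in Example~\ref{ex:GPsi} the even, non-starting element $\gamma=2$ of the cycle $(4,7,1,2)$ has complement $13$, which lies in the \emph{previous} cycle. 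That looseness originates in the paper's own statement and is harmless where the lemma is invoked (the single cycle grown from $1$ in Theorem~\ref{thm:fullcycle}), but your write-up should either restrict to that setting or phrase the conclusion as ``appears earlier in the order in which $\Psi$ is constructed''.
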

\begin{proof}
	By how $\Psi$ is defined and since $\gamma \in \Theta$ is not the starting element, it is impossible for $\gamma$ to appear in the cycle before its complement, which is an odd number. 
\end{proof}

\begin{lemma}\label{lemma:halving}
	Let $\beta \in \{1,2, \ldots,2^{n-1}-1\}$, then 
	$\{\Psi(\beta),\Psi(\beta + 2^{n-1})\} = \{2^n-1 - 2 \beta, 2\beta\}$. By the time both $2^n-1 - 2 \beta$ and $2\beta$ appear in a cycle $\mathcal{C}$, we know that both $\beta$ and $\beta + 2^{n-1}$ must have appeared earlier.
\end{lemma}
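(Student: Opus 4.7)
The lemma has two distinct assertions and the plan addresses them in turn.

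For the set equality $\{\Psi(\beta),\Psi(\beta+2^{n-1})\} = \{2\beta,\, 2^n-1-2\beta\}$, the plan is to unfold the definition (\ref{eq:Psi}) by direct arithmetic. Since $1 \le \beta \le 2^{n-1}-1$, we have $2\beta \in \{2, 4, \ldots, 2^n-2\}$, hence $2\beta \Mod{2^n} = 2\beta$; and $2(\beta + 2^{n-1}) = 2\beta + 2^n$, so $2(\beta+2^{n-1}) \Mod{2^n} = 2\beta$ as well. Consequently, the two candidate images furnished by the rules in (\ref{eq:Psi}), namely the complement image $(2^n-1)-(2\alpha \Mod{2^n})$ and the doubling image $2\alpha \Mod{2^n}$, form the very same pair $\{2^n-1-2\beta,\, 2\beta\}$ for both $\alpha = \beta$ and $\alpha = \beta + 2^{n-1}$. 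Since Theorem \ref{prefercom} guarantees that $\Psi$ is a bijection, the two images $\Psi(\beta)$ and $\Psi(\beta+2^{n-1})$ must be distinct, and therefore they realize precisely the set $\{2\beta,\, 2^n-1-2\beta\}$.

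For the timing statement, the plan is to argue directly from the cycle decomposition (\ref{eq:disj}). Having established that $\beta$ and $\beta+2^{n-1}$ are exactly the two $\Psi$-preimages of the pair $\{2\beta,\, 2^n-1-2\beta\}$, I would invoke the elementary fact that inside any cycle of the permutation $\Psi$ the cyclic predecessor of an element $\gamma$ is its unique preimage $\Psi^{-1}(\gamma)$, which occurs one step earlier in the traversal order. Applying this observation to each of $2\beta$ and $2^n-1-2\beta$, both assumed to lie in $\mathcal{C}$, yields that their preimages $\beta$ and $\beta+2^{n-1}$ also belong to $\mathcal{C}$ and sit exactly one position before their respective images.

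The main subtlety I anticipate is the one already flagged in Lemma \ref{lemma:ordering}: if either $2\beta$ or $2^n-1-2\beta$ happens to coincide with the starting element $\alpha_{k,1}$ of $\mathcal{C}$, then in the linear reading of the cycle its $\Psi$-predecessor is the terminating element $\alpha_{k,\ell}$, which appears last rather than strictly earlier. I would resolve this either by reading ``earlier'' in the cyclic sense (matching how Lemma \ref{lemma:ordering} is deployed later on), or by imposing, parallel to that lemma, the mild restriction that the two images are not starting elements of $\mathcal{C}$, which is the generic case used in the subsequent applications of this lemma.
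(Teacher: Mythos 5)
Your proposal is correct and takes essentially the same route as the paper, whose entire proof is the one-line observation that bijectivity of $\Psi$ forces the two predecessors $\beta$ and $\beta+2^{n-1}$ to have been included before both $2\beta$ and $2^n-1-2\beta$ can appear; your explicit arithmetic for the set equality just spells out what the paper leaves implicit from the proof of Theorem~\ref{prefercom}. The starting-element caveat you flag is a fair observation but harmless in practice, since in the paper's use of the lemma (the induction in Theorem~\ref{thm:fullcycle}) the cycle starts at $1$ and the lemma is only applied to pairs neither of whose members is the starting element.
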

\begin{proof}
	Since $\Psi$ is bijective, the appearance of both $2^n-1 - 2 \beta$ and $2\beta$ requires prior inclusion of both possible predecessors in the cycle.
\end{proof}

The conclusion that $\Gamma_n$ is Hamiltonian for all $n \geq 4$ follows from the next theorem.
\begin{theorem}\label{thm:fullcycle}
	Let the set $\Theta$ and function $\Psi$ be as defined in Theorem \ref{prefercom}. The function $\Psi$ produces a single cycle of length $2^n-1$ that starts at $1$ and ends at $2^n-1$.
\end{theorem}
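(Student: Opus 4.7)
The plan is to prove the theorem in two stages: (a) show that the orbit of $1$ under $\Psi$ covers all of $\Theta$, giving a cycle of length $2^n-1$, and (b) identify the terminal vertex of this orbit as $2^n-1$.

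For stage (a), I argue by contradiction. Suppose the cycle $\mathcal{C}_1$ containing $1$ is a proper subset of $\Theta$, and set $T := \Theta \setminus \mathcal{C}_1$. The greedy nature of Algorithm \ref{alg:complement}, which underlies $\Psi$, implies that no arc of $\Gamma_n$ with tail in $\mathcal{C}_1$ has head in $T$; otherwise the algorithm would have extended $\mathcal{C}_1$ into $T$ at some step. Consequently $T$ is backward-closed under the preimages of both $c$ and $d$, so every $v \in T$ has its corresponding partner pair of preimages contained in $T$. Iterating this backward-closure and invoking Lemma \ref{lemma:halving} (which links complement pairs in a cycle to their partner-pair preimages), I show that the preimage closure of any seed in $T$ must eventually force the inclusion of $2^{n-1}$. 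But then $\Psi(2^{n-1}) = 2^n-1$ -- forced because $d(2^{n-1}) = 0 \notin V$ makes the complement arc the only outgoing arc -- places $2^n-1$ in the cycle through $2^{n-1}$, and tracking this forward reaches $1$, contradicting both $1 \in \mathcal{C}_1$ and the assumption that $T$ is disjoint from $\mathcal{C}_1$.

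For stage (b), given $\mathcal{C}_1 = \Theta$, the terminal $\alpha_\ell$ must satisfy $\Psi(\alpha_\ell) = 1$. Since $d$-images are always even, the wrap can only be through the complement, forcing $c(\alpha_\ell) = 1$ and hence $\alpha_\ell \in \{2^{n-1}-1,\, 2^n-1\}$. I exploit $\Psi(2^{n-1}) = 2^n-1$ to place $2^n-1$ immediately after $2^{n-1}$ in the cycle. By Lemma \ref{lemma:ordering}, the odd vertex $2^{n-1}-1$, being the complement of the even vertex $2^{n-1}$, appears earlier in the cycle than $2^{n-1}$ itself, and therefore strictly before $2^n-1$. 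Applying the greedy complement-preference at $2^{n-1}-1$, we have $c(2^{n-1}-1) = 1 \in \mathcal{C}_1$, so the algorithm falls back to $d(2^{n-1}-1) = 2^n-2$, which gives $\Psi(2^{n-1}-1) = 2^n-2 \neq 1$. This rules out $2^{n-1}-1$ as the terminal vertex and leaves $\alpha_\ell = 2^n-1$.

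The main obstacle is rigorously executing the backward-closure iteration in stage (a): each preimage step introduces a new partner pair into $T$, and I must guarantee that this iteration terminates at a base element intersecting $\mathcal{C}_1$ rather than cycling indefinitely within $T$. Delicate attention is also needed for the singleton $2^{n-1}$ (which has no partner) and for the asymmetric vertex $2^n-1$ (which has a unique $c$-preimage), since these break the uniform partner-pair structure that drives the generic argument.
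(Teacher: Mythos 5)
There is a genuine gap, and it sits exactly where the real work of the theorem lies. Your stage (a) rests on the claim that, because the construction is greedy, ``no arc of $\Gamma_n$ with tail in $\mathcal{C}_1$ has head in $T$.'' That is false: the greedy rule only guarantees that the \emph{terminating} vertex has both of its successors already in the visited set; every non-terminal vertex has just its chosen successor recorded, and the unchosen successor may perfectly well lie outside the cycle. Concretely, for $n=4$ with $v_{\rm init}=2$, Algorithm \ref{alg:complement} produces the $\Psi$-cycle $(2,11,9,13,5,10,4,7,1)$, and vertex $7$ has its doubling successor $14$ outside this cycle; indeed, the existence of such outgoing arcs from partial greedy cycles is precisely what makes the cycle joining of Section \ref{subsec:CJM} possible. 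Without this closure property, $T$ need not be backward-closed, so the backward-closure iteration (which you yourself flag as not executed, including the termination issue and the special vertices $2^{n-1}$ and $2^n-1$) has no foundation. The final contradiction is also unjustified: even if $2^{n-1}$ and hence $2^n-1$ lay in a cycle disjoint from $\mathcal{C}_1$, the $\Psi$-image of $2^n-1$ would simply be $2^n-2$ rather than $1$ (since $\Psi$ permutes $\Theta$ and $1\in\mathcal{C}_1$), so ``tracking forward reaches $1$'' does not follow.

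Your stage (b) is essentially sound, but note that it assumes the fullness from stage (a), whereas the paper proceeds in the opposite, and logically necessary, order: it first shows (for the cycle through $1$, of whatever length) that the terminating element must be $2^n-1$ rather than $2^{n-1}-1$, using Lemma \ref{lemma:ordering} to derive a contradiction from the form $(1,\ldots,2^n-1,2^n-2,\ldots,2^{n-1}-1)$, and only then proves fullness: since $2^{n-1}$ is the unique predecessor of $2^n-1$, Lemmas \ref{lemma:ordering} and \ref{lemma:halving} can be applied inductively, starting from $2^{n-1}$ and $2^{n-1}-1$, to force the appearance of $2^{n-2}$ and $2^{n-1}+2^{n-2}$, then of all sums of the form $2^j+2^{j+1}+\cdots$, hence of every even vertex, and finally of every odd vertex via Lemma \ref{lemma:ordering}. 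To repair your proof you should abandon the ``greedy implies no outgoing arcs'' route and replace stage (a) with an argument of this descending type (or some other argument that uses only the predecessor structure recorded in Lemmas \ref{lemma:ordering} and \ref{lemma:halving}).
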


\begin{proof}
	Let $\mathcal{C}$ be the circle 
	\begin{equation}\label{equ:basic}
		(1,\Psi(1), \Psi(\Psi(1)),\ldots,\delta,\alpha).
	\end{equation}
	Consider the shift-equivalent cycle 
	$(\Psi(\delta)=\alpha, \Psi(\alpha)=1, \Psi(1), \ldots, \delta)$. Since $\Psi(\alpha)=1$, $\alpha$ must be either $2^{n-1}-1$ or $2^n-1$. 
	
	For a contradiction, let us assume that $\alpha=2^{n-1}-1$. Since $1$ has already appeared in $\mathcal{C}$, the only other possible successor of $2^{n-1}-1$, namely $2^n-2$, must have appeared in $\mathcal{C}$. The two possible predecessors of $2^n-2$ are $2^{n-1}-1$ and $2^n-1$. Hence, $\mathcal{C}$ must have the form
	\begin{equation}\label{eq:suppose}
		\mathcal{C} = (1, \ldots, 2^n-1, 2^n-2, \ldots, 2^{n-1}-1).
	\end{equation}
	On the other hand, since  $2^{n-1}-1$ is both an odd number and the terminating element, its complement, namely $2^{n-1}$, must not have appeared in $\mathcal{C}$. But this rules out $2^n-1$ from $\mathcal{C}$ as well, contradicting (\ref{eq:suppose}). Thus, the terminating element must be $\alpha=2^n-1$.
	
	We now show that all elements of $\Theta$ appear in $\mathcal{C}$. The computations are done modulo $2^n$. It is clear that $\delta = 2^{n-1}$, since it is the only preimage of $2^n-1$. Hence, we have
	\begin{equation}\label{eq:Correct}
		\mathcal{C} = (1, \, \Psi(1),\, \ldots, \, 2^{n-1}, \, 2^n-1).
	\end{equation}
	
	Aided by Lemmas \ref{lemma:ordering} and \ref{lemma:halving}, we proceed by induction to confirm that each even number $2 \leq k \leq 2^n-2$ appears.
	\begin{itemize}
		\item  Since $\delta=2^{n-1}$ is an even number, Lemma~\ref{lemma:ordering} says that its complement $\delta^*:=2^n -2^{n-1}-1 = 2^{n-1}-1$ appears. By Lemma~\ref{lemma:halving}, both $2^{n-2}$ and $2^{n-1}+2^{n-2}$ must have appeared before both $\delta$ and $\delta^{*}$ do.
		\item Since $2^{n-2}$ is even, by Lemma~\ref{lemma:ordering}, its complement appear. We infer by Lemma~\ref{lemma:halving} that $2^{n-3}$ and $2^{n-1}+2^{n-3}$ appear. 
		\item We repeat the same reasoning on $2^{n-1}+2^{n-2}$. Its complement appears and so do both $2^{n-2}+2^{n-3}$ and $2^{n-1}+ 2^{n-2}+2^{n-3}$.
		\item Continuing the process, we establish the appearance of
		\begin{align}
			& 2, \, 2^2, \, \ldots, \, 2^{n-1}, \notag \\
			& 2 + 2^{n-1}, \, 2^2 + 2^{n-1}, \, \ldots, \, 2^{n-2} + 2^{n-1}, \notag \\
			& 2 + 2^{n-2}+ 2^{n-1}, \, 2^2 + 2^{n-2}+ 2^{n-1}, \, \ldots,  2^{n-3} + 2^{n-2} + 2^{n-1}, \notag \\
			& \ldots, \, \ldots , \, \ldots, \, \ldots, \notag \\
			&2 + 2^2 + \ldots + 2^{n-2} + 2^{n-1},
		\end{align}
		which cover all even numbers in the desired range.
	\end{itemize}
	Lemma~\ref{lemma:ordering} ensures the appearance of each odd number $j$ such that $3 \leq j \leq (2^n-3)$, completing the proof.
\end{proof}

\begin{corollary}\label{cor:Hamilton}
	The graph $\Gamma_n$ for each $n \geq 4$ is Hamiltonian.
\end{corollary}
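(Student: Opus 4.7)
The plan is to read Theorem \ref{thm:fullcycle} as a statement about arcs of $\Gamma_n$ and then close the single long $\Psi$-orbit into a cycle. The first step is to identify each application of $\Psi$ with traversal of an actual arc of $\Gamma_n$: the default image $(2^n-1)-(2\alpha_i \bmod 2^n)$ of $\Psi$ is precisely the head of the red (complement) arc from $\alpha_i$ defined in~(\ref{eq:complement}), while the fallback image $2\alpha_i \bmod 2^n$ coincides with the head of the blue (double) arc from $\alpha_i$ defined in~(\ref{eq:double}). Consequently, every consecutive pair in the $\Psi$-cycle is a genuine arc in $\Gamma_n$, and the cycle produced by $\Psi$ translates step for step into a directed walk in $\Gamma_n$.

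By Theorem \ref{thm:fullcycle}, $\Psi$ produces a single cycle $(1, \Psi(1), \Psi^2(1), \ldots, 2^{n-1}, 2^n-1)$ of length $2^n-1 = |V(\Gamma_n)|$. Through the identification above, this becomes a directed Hamiltonian path in $\Gamma_n$ running from vertex $1$ to vertex $2^n-1$. To promote it to a Hamiltonian \emph{cycle}, the final task is to exhibit an arc from the terminal vertex $2^n-1$ back to the starting vertex $1$. This is a one-line calculation: since $2(2^n-1) \bmod 2^n = 2^n-2$, the red arc out of $2^n-1$ points to $(2^n-1)-(2^n-2)=1$, which supplies the desired closing arc.

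No further obstacle is anticipated; the corollary follows in a few lines. The real combinatorial work has already been carried out in Theorem \ref{thm:fullcycle} and its supporting Lemmas~\ref{lemma:ordering} and~\ref{lemma:halving}. The corollary is essentially a graph-theoretic repackaging: the careful bookkeeping built into the definition of $\Psi$ was arranged so that the default/fallback dichotomy mirrors exactly the red/blue arc dichotomy in $\Gamma_n$, and the terminal-to-initial closing arc is automatic from the fact that $\Psi$ terminates at $2^n-1$, whose only successor in $\Gamma_n$ other than $2^n-2$ is the starting vertex $1$.
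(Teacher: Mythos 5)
Your proof is correct and follows essentially the same route as the paper: it reads the single $\Psi$-cycle of Theorem \ref{thm:fullcycle} as a sequence of arcs of $\Gamma_n$, giving a Hamiltonian path from $1$ to $2^n-1$. You are in fact slightly more complete than the paper's own proof, since you explicitly verify the identification of $\Psi$-steps with the double/complement arcs and exhibit the closing arc $(2^n-1)\to 1$, which the paper leaves implicit under its convention of using ``Hamiltonian path'' and ``Hamiltonian cycle'' interchangeably.
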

\begin{proof}
	We have built $\mathcal{C}=(1,\Psi(1),\Psi(\Psi(1)),\ldots, 2^{n-1},2^n-1)$ in the proof of Theorem~\ref{thm:fullcycle}. Following the sequence of vertices in $\mathcal{C}$, a Hamiltonian path in $\Gamma_n$ is formed by the edges
	\[
	(1,\Psi(1)),\, (\Psi(1),\Psi(\Psi(1))), \, \ldots, \, (2^{n-1},2^n-1).
	\]
\end{proof}

\subsection{More Hamiltonian Cycles by Cycle Joining}\label{subsec:CJM}

Theorem \ref{thm:fullcycle} guarantees that, starting from $v_{\rm init}=1$, Algorithm \ref{alg:complement} produces a cycle of length $2^n-1$. However, for most other $v_{\rm init}$, we obtain disjoint cycles as in (\ref{eq:disj}) with $j > 1$. The following lemma gives a condition for when two disjoint cycles can be joined into a longer cycle.

\begin{lemma}\label{lemma:cjm}
	Two disjoint cycles $\mathcal{C}$ and $\widehat{\mathcal{C}}$ can be joined into one cycle if there exist $c \in \mathcal{C}$ and $d \in \widehat{\mathcal{C}}$ such that $d$ is the complement of $c$, that is $d = 2^n-1 - c$.
\end{lemma}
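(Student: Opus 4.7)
The plan is to locate a pair of vertices $\{v_1, v_2\} \subset V(\Gamma_n)$ whose combined outgoing arcs cover exactly $\{c, d\}$, and then to merge the two cycles by swapping which of the two arcs each vertex uses. Since $c + d = 2^n - 1$ is odd, exactly one of $c, d$ is even; without loss of generality assume $c$ is even (the case $c$ odd is symmetric, swapping the roles of (\ref{eq:double}) and (\ref{eq:complement})). Set $v_1 := c/2$ and $v_2 := c/2 + 2^{n-1}$. Both lie in $V(\Gamma_n)$, and $2 v_i \equiv c \Mod{2^n}$ for $i \in \{1,2\}$, so by (\ref{eq:double}) and (\ref{eq:complement}) the doubling arc from $v_i$ lands at $c$ while the double-then-complement arc lands at $(2^n-1) - c = d$. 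A preimage count then shows that $\{v_1, v_2\}$ is the complete predecessor set of $c$ in $\Gamma_n$ (all via doubling), and likewise of $d$ (all via complement).

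Let $p$ be the unique predecessor of $c$ in $\mathcal{C}$ and $q$ the unique predecessor of $d$ in $\widehat{\mathcal{C}}$. Then $p, q \in \{v_1, v_2\}$, and since $\mathcal{C} \cap \widehat{\mathcal{C}} = \emptyset$ we must have $p \ne q$, forcing $\{p, q\} = \{v_1, v_2\}$ with $p \in \mathcal{C}$ and $q \in \widehat{\mathcal{C}}$. Now replace the arc $p \to c$ in $\mathcal{C}$ by $p \to d$, and the arc $q \to d$ in $\widehat{\mathcal{C}}$ by $q \to c$; both new arcs belong to $\Gamma_n$ by the first step. Tracing from $p$ we follow $p \to d$, then the untouched portion of $\widehat{\mathcal{C}}$ back to $q$, then $q \to c$, then the untouched portion of $\mathcal{C}$ back to $p$, producing a single directed cycle on $V(\mathcal{C}) \cup V(\widehat{\mathcal{C}})$.

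The main obstacle is the first step: pinning down the conjugate pair $\{v_1, v_2\}$ and verifying that both of these vertices have successor set exactly $\{c, d\}$. Once this is in hand, the disjointness hypothesis forces the symmetric predecessor configuration, and the arc swap mechanically concatenates the two cycles into one.
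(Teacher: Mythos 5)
Your proposal is correct and is essentially the paper's own argument: the paper also joins the two cycles by exchanging the predecessors of the complementary pair $c$ and $d$. The only difference is that you explicitly exhibit the common in-neighbourhood $\{c/2,\, c/2+2^{n-1}\}$ and check that both swapped arcs really are arcs of $\Gamma_n$, a verification the paper leaves implicit.
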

\begin{proof}
	Let $\mathcal{C}=(c_1,\, \ldots,\, c_{i-1},\, c_{i}, \, c_{i+1},\, \ldots, c_k)$ and $\widehat{\mathcal{C}}=(d_1,\, \ldots,\, d_{j-1},\, d_{j}, \, d_{j+1},\, \ldots, d_{\ell})$, with $c_i$ and $d_j$ forming a complementary pair, that is, $c_i + d_j = 2^n-1$. We exchange the predecessors of $c_i$ and $d_j$ to obtain the joined cycle
	\[
	(c_1,\, \ldots,\, c_{i-1},\, d_j, \, d_{j+1}, \, \ldots, d_{\ell}, d_1, \, \ldots, d_{j-1}, \, c_{i}, \, c_{i+1},\, \ldots, c_k).
	\]
\end{proof}

\begin{theorem}\label{thm:cyclejoin}
	If $\Psi$ generates disjoint cycles as in (\ref{eq:disj}), then all of the cycles can be joined into a single cycle of length $2^n-1$.
\end{theorem}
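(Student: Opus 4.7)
The plan is to induct on $j$, the number of cycles in (\ref{eq:disj}), and to invoke Lemma \ref{lemma:cjm} as the merging step. The base case $j=1$ is immediate. For $j \geq 2$, it suffices to exhibit a complementary pair $\{c,\,2^n-1-c\}$ whose two elements lie in distinct cycles of $\Psi$; once this is done, Lemma \ref{lemma:cjm} fuses the two corresponding cycles into a single cycle of their combined length, and induction on $j$ finishes the argument.

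The crux is therefore to prove that such a cross-cycle complementary pair always exists when $j \geq 2$. I would argue by contradiction: assume no pair of distinct cycles shares a complementary pair, so that for every $\alpha \in \Theta \setminus \{2^n-1\}$ the complement $2^n-1-\alpha$ lies in the same cycle as $\alpha$. Write $\mathcal{C}^{\star}$ for the cycle containing $2^n-1$. Since $2^{n-1}$ has $0 \notin \Theta$ as its would-be doubling image, $\Psi(2^{n-1})=2^n-1$ is forced, placing $2^{n-1} \in \mathcal{C}^{\star}$. Closure under complementation then yields $2^{n-1}-1 \in \mathcal{C}^{\star}$. Because $\Psi$ is a bijection and $\Psi^{-1}(\{2^{n-1},\,2^{n-1}-1\}) = \{2^{n-2},\,3 \cdot 2^{n-2}\}$, both preimages also lie in $\mathcal{C}^{\star}$.

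From this base I would replicate, essentially verbatim, the halving-plus-complement recursion used to prove Theorem \ref{thm:fullcycle}: each time a pair $\{2\beta,\,2^n-1-2\beta\}$ has been placed in $\mathcal{C}^{\star}$, Lemma \ref{lemma:halving} (still valid because a $\Psi$-cycle is closed under $\Psi^{-1}$) deposits both $\beta$ and $\beta+2^{n-1}$ into $\mathcal{C}^{\star}$, after which cycle-wise closure inserts their complements. Iterating deposits every integer of the form $\sum_{i \in I} 2^{i}$ with $I \subseteq \{1,\ldots,n-1\}$ into $\mathcal{C}^{\star}$, which exhausts the even elements of $\Theta$; one final pass of closure brings in all odd elements. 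Hence $\mathcal{C}^{\star} = \Theta$, contradicting $j \geq 2$ and producing the required cross-cycle complementary pair. The main obstacle is precisely verifying that the halving step still goes through in the multi-cycle setting, but this reduces to the observation that any $\Psi$-cycle is $\Psi^{-1}$-invariant, a restatement of the fact that $\Psi$ is a permutation of $\Theta$.
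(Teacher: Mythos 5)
Your proposal is correct and takes essentially the same route as the paper: assuming no complementary pair straddles two cycles, complement-closure combined with closure under predecessors (the halving argument of Lemma \ref{lemma:halving}) forces a single cycle to absorb all of $\Theta$, contradicting $j \geq 2$, after which Lemma \ref{lemma:cjm} merges cycles. The one point you leave implicit --- that after a merge the remaining $j-1$ cycles are cycles of a modified successor map rather than of $\Psi$, so the existence argument must be re-applied to that map (it does carry over, because the two out-neighbours of any vertex of $\Gamma_n$ are complements of each other, so the predecessor swap in Lemma \ref{lemma:cjm} stays within $\Gamma_n$) --- is left just as implicit in the paper's own proof, which simply declares that it ``suffices'' to find, in any one cycle, an element whose complement lies outside it.
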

\begin{proof}
	If $\Psi$ generates only one cycle, then it is clear that the length of the cycle must be $2^n-1$. Suppose that $\Psi$ generates at least two cycles and we take any cycle $\mathcal{C}$. It suffices to show that there exists $c \in \mathcal{C}$ whose complement $\widehat{c}:=2^n-1-c$ does not appear in $\mathcal{C}$. For a contradiction, let there be no such element. Hence, every element in $\mathcal{C}$ has its complement in $\mathcal{C}$, that is, $\widehat{c} \in \mathcal{C}$ for any $c \in \mathcal{C}$. By definition, exactly one of either $c$ or $\widehat{c}$ is an even number $e$. Therefore, the predecessors of $c$ and $\widehat{c}$, namely, $k := \frac{e}{2}$ and $2^{n-1} + k$ also appear in $\mathcal{C}$. Following this fact, all integers in $\{1,2,..,2^n-1\}$ appear in $\mathcal{C}$. This contradicts the assumption that $\Psi$ generates two or more cycles. 
\end{proof}

We can implement Theorem \ref{thm:cyclejoin} and enumerate the number of resulting Hamiltonian cycles that can be constructed by adopting the {\it cycle joining method} from the theory of feedback shift registers. 

Let $\Psi$ be expressed in terms of its disjoint cycles as in (\ref{eq:disj}). For distinct $1 \leq i \neq k  \leq j$, let $\tau_{i,k}:=(r,s)$ denote $r \in C_i$ and $s \in C_k$ with $r+s = 2^n-1$. The tuple joins $C_i$ and $C_k$ by interchanging the respective {\it predecessors} of $r$ and $s$. 

To count the number of inequivalent Hamiltonian cycles that can be produced from $\Psi$, we build the associated undirected multigraph $G_{\Psi}$ as follows. The vertices are $\mathcal{C}_1, \ldots, \mathcal{C}_j$. We add an edge labelled $(r,s)$ between $C_{i}$ and $C_k$ whenever there is a pair $(r,s)$ with the property that $r \in C_i$, $s \in C_k$, and $r+s=2^n-1$. The graph $G_{\Psi}$ has no loops but may have multiple edges connecting the same pair of vertices. The number of Hamiltonian cycles that can be constructed in this manner is equal to the number of subgraphs of $G_{\Psi}$ which are {\it rooted spanning trees}.    

The following well-known counting formula is a variant of the BEST (de {\bf B}ruijn, {\bf E}hrenfest, {\bf S}mith, and {\bf T}utte) Theorem. More detail on graphical approaches to the generation of full cycles, including the BEST Theorem and its history, can be found in \cite[Section~2]{Fred82}. The {\it cofactor} of entry $m_{i,k}$ in a matrix $M:=(m_{i,k})$ is $(-1)^{i+k}$ times the determinant of the matrix obtained by deleting the $i^{\rm th}$ row and $k^{\rm th}$ column of $M$.

\begin{theorem}(BEST)\label{BEST} Let $V:=\{\mathcal{C}_1,\ldots, \mathcal{C}_j\}$ be the vertex set of $G_{\Psi}$. Let $M=(m_{i,k})$ be the $j \times j$ matrix derived from $G_{\Psi}$ in which $m_{i,i}$ is the number of edges incident to $\mathcal{C}_i$ and $m_{i,k}$ is the negative of the number of edges between vertices $\mathcal{C}_i$ and $\mathcal{C}_k$ for $i \neq k$. Then the number of rooted spanning trees of $G_{\Psi}$ is the cofactor of any entry of $M$.
\end{theorem}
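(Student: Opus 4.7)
The plan is to identify $M$ as the combinatorial Laplacian of the multigraph $G_\Psi$ and invoke Kirchhoff's Matrix-Tree Theorem, the undirected companion of the BEST Theorem. By the defining formulas in the theorem statement, $M = D - A$, where $D$ is the diagonal matrix whose $(i,i)$ entry records the degree of $\mathcal{C}_i$ (counted with edge multiplicities) and $A$ is the symmetric adjacency multiplicity matrix of $G_\Psi$. Each edge of $G_\Psi$ adds $+1$ to two diagonal entries of $M$ and $-1$ to the two symmetric off-diagonal entries at its endpoints, so every row of $M$ sums to zero. The all-ones vector therefore lies in $\ker M$; a standard manipulation---add every other row into a fixed row, then observe that the result becomes zero after deleting any single row together with any single column---shows that all $(j-1)\times(j-1)$ cofactors of $M$ share a common value.

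To evaluate this common cofactor, the plan is to orient each edge of $G_\Psi$ arbitrarily and form the signed vertex--edge incidence matrix $B$ of size $j\times|E(G_\Psi)|$; a direct computation yields the factorization $M = B\,B^{\top}$. Applying the Cauchy--Binet formula to any $(j-1)\times(j-1)$ principal submatrix of $M$ expresses its determinant as a sum over $(j-1)$-subsets $F\subseteq E(G_\Psi)$ of the squared determinant of the associated $B$-submatrix. The key combinatorial lemma, classical since Kirchhoff, is that this squared determinant equals $1$ exactly when $F$ is the edge set of a spanning tree of $G_\Psi$ and vanishes otherwise. The cofactor therefore enumerates the spanning trees of $G_\Psi$, and the choice of which row and column to delete pins down the root vertex, giving precisely the number of rooted spanning trees.

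The main obstacle is the Cauchy--Binet identification of $\pm 1$ submatrix determinants of $B$ with the spanning tree property of the indexing edges; this is the crux of Kirchhoff's theorem and can be invoked rather than reproved. Every other ingredient---the symmetry of $M$, the vanishing row sums, and the factorization $M = B B^{\top}$---follows directly from the construction of $G_\Psi$ out of the disjoint-cycle decomposition of $\Psi$, so the only care needed is in matching the conventions for \emph{rooted} spanning trees (spanning tree plus a distinguished root) with the cofactor expansion that deletes the corresponding row and column.
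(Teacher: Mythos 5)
Your proposal is correct, but note that the paper itself does not prove this statement at all: it presents Theorem~\ref{BEST} as a well-known counting formula, a variant of the BEST theorem, and simply points the reader to Fredricksen's survey for background. What you have written is essentially the classical Kirchhoff Matrix--Tree argument: recognize $M$ as the Laplacian $D-A$ of the multigraph $G_{\Psi}$, use the zero row sums (all-ones kernel) to show all cofactors coincide, factor $M=BB^{\top}$ through an arbitrarily oriented incidence matrix, and apply Cauchy--Binet together with the standard lemma that a $(j-1)\times(j-1)$ minor of $B$ is $\pm 1$ precisely when the chosen edges form a spanning tree. This is a legitimate and complete route (invoking the classical minor lemma is reasonable, since it is the content of Kirchhoff's theorem), and it would make the paper self-contained where the authors chose citation instead. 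Two small points of care that you handle correctly but should keep explicit: $G_{\Psi}$ is a multigraph with parallel edges but no loops, so $B$ must carry one column per edge with multiplicity (loops would break the $\pm 1$ column structure, but the paper guarantees none occur); and the paper's phrase ``rooted spanning trees'' should be read as spanning trees with the root fixed by the deleted row and column, whose count is root-independent---consistent with all cofactors of $M$ being equal---so your reconciliation matches the intended meaning (e.g., the cofactor $8$ in the paper's $n=4$ example is exactly the number of spanning trees of that three-vertex, five-edge multigraph).
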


\begin{example}\label{ex:GPsi}
	A randomized instance for $n=4$ picks
	\[
	\Psi = (6, 3, 9, 13, 5, 10, 11) \circ (4, 7, 1, 2) \circ (14, 12, 8, 15).
	\]
	We label the cycles from left to right as $\mathcal{C}_1$, $\mathcal{C}_2$, and $\mathcal{C}_3$ to get the associated graph $G_{\Psi}$ in Figure \ref{fig:GPsi}.
	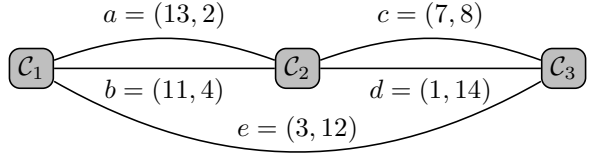
\begin{figure}
		\centering
		\begin{tikzpicture}
			[
			auto,
			node distance = 3.5cm,
			semithick
			]
			
			\tikzstyle{every state}=
			\node[rectangle,fill=lightgray,draw,rounded corners,minimum size = 5mm]
			
			\node[state] (1) {$\mathcal{C}_1$};
			\node[state] (2) [right of=1] {$\mathcal{C}_2$};
			\node[state] (3) [right of=2] {$\mathcal{C}_3$};
			
			\path[-] (1) edge node[below]{$b=(11,4)$}(2);
			\path[-] (1) edge[bend left=20] node[above]{$a=(13,2)$} (2);
			
			\path[-] (2) edge node[below]{$d=(1,14)$}(3);
			\path[-] (2) edge[bend left=20] node[above]{$c=(7,8)$} (3);
			\path[-] (1) edge[bend right=30] node[above]{$e=(3,12)$}(3);
			
		\end{tikzpicture}
		\caption{The associated graph $G_{\Psi}$ for Example \ref{ex:GPsi}.}\label{fig:GPsi}
	\end{figure}
	
	The associated matrix is
	\[
	M:=
	\begin{bmatrix*}[r]
		3 & -2 & -1\\
		-2 & 4 & -2\\
		-1 & -2 & 3
	\end{bmatrix*}.
	\]
	By BEST Theorem, there are $8$ Hamiltonian cycles that can be constructed from $\Psi$. Table~\ref{tab:8H} list them. All but one of the sequences have maximal linear complexity $14$. 
	\begin{table*}[h!]
		\caption{The $8$ Hamiltonian cycles produced from $\Psi$ by cycle joining in Example \ref{ex:GPsi}. For $n=4$ we have $F(x)= 1 + x + x^2 + \ldots + x^{14}$.}\label{tab:8H}
		\centering
		\resizebox{\textwidth}{!}{%
			\begin{tabular}{cccl}
				\toprule
				Joining & Hamiltonian Cycle & Modified de Bruijn Sequence $\mathbf{s}$ & $m_{\mathbf{s}}(x)$ \\
				\midrule
				$a,c$ & $(6,3,9,2,4,8,15,14,12,7,1,13,5,10,11)$ & $(1,1,0,0,0,1,0,0,1,1,1,1,0,1,0)$ & $F(x)$ \\
				
				$a,d$ & $(6,12,8,15,14,3,9,2,4,7,1,13,5,10,11)$ & $(0,0,1,0,1,1,0,0,1,1,1,1,0,1,0)$ & $F(x)$ \\
				
				$b,c$ & $(6,3,9,13,5,10,4,8,15,14,12,7,1,2,11)$ & $(1,1,1,1,0,0,0,1,0,0,1,1,0,1,0)$ & $x^4+x+1$ \\
				
				$b,d$ & $(6,3,9,13,5,10,4,7,14,12,8,15,1,2,11)$ & $(1,1,1,1,0,0,1,0,0,0,1,1,0,1,0)$ & $F(x)$ \\
				
				$e,a$ & $(6,12,8,15,14,3,9,2,4,7,1,13,5,10,11)$ & $(0,0,1,0,1,1,0,0,1,1,1,1,0,1,0)$ & $F(x)$ \\
				
				$e,b$ & $(6,12,8,15,14,3,9,13,5,10,4,7,1,2,11)$ & $(0,0,1,0,1,1,1,1,0,0,1,1,0,1,0)$ & $F(x)$ \\
				
				$e,c$ & $(6,12,7,1,2,4,8,15,14,3,9,13,5,10,11)$ & $(0,1,1,0,0,0,1,0,1,1,1,1,0,1,0)$ & $F(x)$ \\
				
				$e,d$ & $(6,12,8,15,1,2,4,7,14,3,9,13,5,10,11)$ & $(0,0,1,1,0,0,1,0,1,1,1,1,0,1,0)$ & $F(x)$ \\
				
				\bottomrule		
			\end{tabular}
		}
	\end{table*}
\end{example}

\section{The Canonical Generator Polynomial}

In this section we show that there exists a canonical generator $c_{\mathcal{H}}(x) \in \mathbb{F}_2[x]$ for every Hamiltonian cycle $\mathcal{H} \in \Gamma_n$. 

The {\it de Bruijn graph}, denoted by $B_n$ or simply $B$ when $n$ is understood, is also known as the {\it Good graph} and {\it de Bruin-Good graph}. It was introduced independently by de Bruijn in \cite{Bruijn46} and by Good in \cite{Good1946}. Its set of vertices consists of binary $n$-strings 
\[
\{v_1,v_2,\ldots, v_n : v_j \in \mathbb{F}_2 \mbox{ for all } 1 \leq j \leq n \}. 
\]
An arc from vertex $c_1, c_2, \ldots ,c_n$ to vertex $c_2, c_3, \ldots, c_{n+1}$ is labelled $0$ and $1$, respectively, if $c_{n+1} = 0$ and $c_{n+1} = 1$. 

\begin{theorem}\label{thm:correspondence}
	If $\mathcal{H}$ is a Hamiltonian cycle in $\Gamma_n$, then $\mathcal{H}$ corresponds to a modified binary de Bruijn sequence.
\end{theorem}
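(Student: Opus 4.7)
The plan is to treat the sequence of arc labels along $\mathcal{H}$ as a candidate modified de Bruijn sequence and then verify the defining window property directly, bypassing the need to construct the generator polynomial $g(x)$. Let $\mathcal{H}$ traverse the vertices $v_0, v_1, \ldots, v_{N-1}$ with $N = 2^n - 1$, and let $\ell_i \in \{0,1\}$ be the label of the arc from $v_i$ to $v_{i+1}$, so that $\ell_i = 0$ for a doubling arc and $\ell_i = 1$ for a complement arc. Write $a_i(x) = \sum_{k=0}^{n-1} a_{i,k}\,x^k$ for the polynomial associated with $v_i$. The goal is to show that the periodic binary sequence $\mathbf{s} := (\ell_0, \ell_1, \ldots, \ell_{N-1})$ contains every nonzero binary $n$-tuple exactly once as a length-$n$ cyclic window.

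\textbf{Reading arcs as coefficient recurrences.} First I would derive a closed-form relation between the coefficients $a_{i,k}$ and the labels. Writing out (\ref{eq:double}) and (\ref{eq:complement}) coefficient-wise and adopting the convention $a_{i-1,-1} := 0$, both arc types collapse into the uniform recursion $a_{i,k} = a_{i-1,k-1} + \ell_{i-1}$ for $0 \leq k \leq n-1$ (with all indices of $\ell$ taken modulo $N$). Telescoping this recursion yields
\[
a_{i,k} = \ell_{i-k-1} + \ell_{i-k} + \cdots + \ell_{i-1} \pmod{2}, \qquad 0 \leq k \leq n-1.
\]
This realises the coefficient vector $(a_{i,0}, \ldots, a_{i,n-1})$ as the image of the length-$n$ window $w_{i-n} := (\ell_{i-n}, \ldots, \ell_{i-1})$ under a fixed matrix $W \in \mathbb{F}_2^{n \times n}$ whose $(k,j)$ entry is $1$ iff $j \geq n-k-1$. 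Since $W$ has $1$'s on and above the anti-diagonal, it is nonsingular over $\mathbb{F}_2$ and $W\mathbf{0} = \mathbf{0}$. Consequently the map window $\mapsto$ vertex polynomial is an $\mathbb{F}_2$-linear bijection that matches nonzero windows with nonzero vertices.

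\textbf{Counting to finish.} Because $\mathcal{H}$ is Hamiltonian, the polynomials $a_0, \ldots, a_{N-1}$ exhaust $\Omega$, that is, all nonzero binary $n$-tuples. By the bijection above, the associated windows $w_{i-n}$ for $i = 0, \ldots, N-1$ are $N$ distinct nonzero $n$-tuples; reindexing by $j := (i-n) \bmod N$ shows the same for the length-$n$ windows $w_j$ of $\mathbf{s}$ at every starting position $j$. Since $|\Omega| = N = 2^n - 1$ is exactly the number of nonzero $n$-tuples, each such tuple appears exactly once as a window of $\mathbf{s}$. In particular no $n$ consecutive zeros occur in one period, so the longest run of zeros has length exactly $n-1$; inserting a single zero into that run yields an order-$n$ de Bruijn sequence, and $\mathbf{s}$ is the corresponding modified de Bruijn sequence.

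\textbf{Expected obstacle.} The only delicate point is the uniformity of the coefficient recurrence: one must verify that (\ref{eq:double}) and (\ref{eq:complement}) really do collapse to $a_{i,k} = a_{i-1,k-1} + \ell_{i-1}$ under the boundary convention $a_{i-1,-1} = 0$, so that the telescoping is immediate with no case split. Once that is secured, the theorem reduces to elementary linear algebra (invertibility of $W$ over $\mathbb{F}_2$) together with the pigeonhole count $|\Omega| = 2^n - 1$, and the correspondence $\mathcal{H} \leftrightarrow \mathbf{s}$ falls out without having to exhibit $g(x)$ or appeal to Theorem~\ref{teotan3} directly.
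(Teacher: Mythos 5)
Your proof is correct, but it follows a different route from the paper's. The paper argues by mapping the arcs of $\mathcal{H}$ onto arcs of the de Bruijn graph $\mathcal{B}_n$ (asserting that a $0$-labelled arc joins the same pair of vertices in both graphs and that a $1$-labelled arc ``corresponds to'' a $1$-labelled arc of $\mathcal{B}_n$) and then reads off the label sequence; note that for complement arcs this correspondence is not vertex-preserving, so the paper's proof leaves an implicit relabelling unverified. You instead bypass $\mathcal{B}_n$ entirely: the uniform recursion $a_{i,k}=a_{i-1,k-1}+\ell_{i-1}$ (with $a_{i-1,-1}:=0$) does hold for both (\ref{eq:double}) and (\ref{eq:complement}), the telescoped partial-sum matrix $W$ is indeed unipotent up to column reversal and hence invertible over $\mathbb{F}_2$, and since the Hamiltonian cycle makes the vertices exhaust $\Omega$, the pigeonhole count shows every nonzero $n$-tuple occurs exactly once as a cyclic window of the label sequence, which is precisely the window characterization of a modified de Bruijn sequence (the content of Lemma~\ref{teotan2}, re-derived directly at the graph level). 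What your approach buys is an explicit, self-contained verification of the window property, including the fact that the vertex at each position is an invertible linear image of the preceding length-$n$ window; what the paper's approach buys is brevity by leaning on the known structure of $\mathcal{B}_n$. Minor polish: you could note in one line that the windows $0^{n-1}1$ and $10^{n-1}$ occur (so the longest zero run has length exactly $n-1$) and that distinctness of the $N$ windows forces the period to be exactly $2^n-1$, but both are immediate from your counting step.
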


\begin{proof}
	We revert back to the binary string representation of the vertices in $\Gamma_n$. Let $e_1, e_2,\ldots,e_{2^n-1}$ be the labels, each is either $0$ or $1$, on the ordered arcs in $\mathcal{H}$. If $e_i=0$, then the $i^{\rm th}$ arc connects the exact same pair of vertices in both $\Gamma_n$ and in the original de Bruijn graph $\mathcal{B}_n$. Moreover, if $e_j=1$, then it corresponds to an arc with label $1$ in $\mathcal{B}_n$. Thus, the sequence $(e_1,e_2,\ldots,e_{2^n-1})$ that corresponds to $\mathcal{H}$ is a modified de Bruijn sequence.
\end{proof}

\begin{corollary}\label{cor:existence of g(x)}
	If $\mathcal{H}$ is a Hamiltonian cycle in $\Gamma_n$, then there exists a polynomial $g(x) \in \mathbb{F}_2[x]$ that generates $\mathcal{H}$. The consecutive elements in 
	\[
	W_g =\{(x^i~g(x) \Mod{F(x)}) \Mod{x^n} : 0 \leq i < 2^n-1\}
	\]
	forms the Hamiltonian path in $\mathcal{H}$.
\end{corollary}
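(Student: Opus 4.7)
The plan is to start from the modified de Bruijn sequence $\mathbf{s}$ attached to $\mathcal{H}$ via Theorem~\ref{thm:correspondence}, promote its reduced rational representation to one with denominator $F(x)$, and check that the resulting $g(x)$ drives $W_g$ along the arcs of $\mathcal{H}$. Writing $f(x) := m_{\mathbf{s}}(x)$, the result of Mayhew and Golomb recalled in Section~\ref{sec:poly2graph} together with the factorization of $F(x)$ through the set $\mathcal{A}$ in (\ref{eq:setA}) gives $f(x) \mid F(x)$, and by self-reciprocity of $F(x)$ also $f^{*}(x) \mid F(x)$. Set $h(x) := F(x)/f^{*}(x)$; the identities $F(0)=f^{*}(0)=1$ yield $h(0)=1$. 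Taking the reduced representation $s(x)=g_0(x)/f^{*}(x)$ from (\ref{eq:ratio})--(\ref{eq:fracpoly}), after a cyclic shift of $\mathbf{s}$ ensuring $s_0=1$, I would define
\[
g(x) := g_0(x)\, h(x),
\]
a polynomial with $g(0)=1$, $\deg g < \deg F$, and $g(x)/F(x)=s(x)$.

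Next, I would verify that consecutive elements of $W_g$ walk along arcs of $\Gamma_n$ and cover every vertex. Let $r_i(x) := g(x)\,x^i \Mod{F(x)}$ and $v_i := r_i(x)\Mod{x^n}$. The recursion $r_{i+1}(x) = x\,r_i(x) \Mod{F(x)}$ splits on the coefficient of $x^{2^n-3}$ in $r_i(x)$: if it is $0$, no reduction is needed and $v_{i+1} = x\,v_i\Mod{x^n}$, matching the double arc (\ref{eq:double}); if it is $1$, a single subtraction of $F(x)$ occurs and, using $F(x)\Mod{x^n} = 1+x+\cdots+x^{n-1}$, one obtains the complement arc (\ref{eq:complement}). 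For full coverage, the division algorithm applied to $g=g_0 h$ and $F=f^{*}h$ gives the identity $r_i(x) = h(x)\bigl(g_0(x)\,x^i \Mod{f^{*}(x)}\bigr)$, so $v_i \equiv h(x)\,u_i(x) \pmod{x^n}$ where $u_i(x) := (g_0(x)\,x^i \Mod{f^{*}(x)})\Mod{x^n}$. Lemma~\ref{teotan2} applied to the reduced form guarantees $\{u_i : 0 \le i < 2^n-1\} = \Omega$, and since $h(0)=1$ the class of $h(x)$ in $\mathbb{F}_2[x]/(x^n)$ is a unit; multiplication by it is therefore a bijection of $\Omega$ onto itself, so $\{v_i\}=\Omega$ and $W_g$ is a Hamiltonian cycle in $\Gamma_n$.

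The step I expect to be hardest is identifying this Hamiltonian cycle with $\mathcal{H}$ itself rather than some other Hamiltonian cycle of $\Gamma_n$. The crux is that at each step the arc-type decision is dictated by a single coefficient of $r_i(x)$; unpacking $g(x)/F(x)=s(x)$ encodes that coefficient as a specific bit of $\mathbf{s}$, so the cyclic sequence of arc labels along $W_g$ equals $\mathbf{s}$ up to a cyclic shift. Since two Hamiltonian cycles in $\Gamma_n$ with the same cyclic sequence of arc labels coincide up to the choice of starting vertex, $W_g$ then agrees with $\mathcal{H}$ up to the cyclic shift induced by $v_0 = g(x)\Mod{x^n}$, completing the proof.
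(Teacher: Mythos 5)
Your construction of $g(x)$ and the verification that $W_g$ is a Hamiltonian cycle are sound: the factorization $g = g_0h$ with $h = F/f^{*}$, the identity $x^i g \bmod F = h\cdot(x^i g_0 \bmod f^{*})$, the unit argument modulo $x^n$, and the arc-following analysis via the coefficient of $x^{2^n-3}$ are all correct, and this part is in fact more detailed than the paper's own two-line proof (which likewise goes through Theorem~\ref{thm:correspondence}, a representation $g/F$, and Lemma~\ref{teotan2}). The genuine gap is exactly at the step you flagged as hardest: the claim that the cyclic sequence of arc labels along $W_g$ equals $\mathbf{s}$ up to a cyclic shift is false. By Lemma~\ref{teotan1}, multiplying the numerator by $x$ modulo $F(x)$ represents the shift $L^{2^n-2}$, i.e.\ one step \emph{backwards}; unwinding this, the reduction bit at step $i$ (your arc label) equals $s_{2^n-2-i}$, so the label word of $W_g$ is $\mathbf{s}$ read in reverse, and a reversal is in general not a rotation. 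Hence the $g$ you build from the generating function of $\mathcal{H}$'s own label sequence generates the Hamiltonian cycle whose labels are the reversal of $\mathcal{H}$'s --- generally a different cycle --- not $\mathcal{H}$ itself. The paper's own data witness this: for $n=4$, $g(x)=x^{10}+x^8+x^5+x+1$ has $g/F$ representing $(1,0,1,0,0,1,1,0,1,1,1,1,0,0,0)$ (Table~\ref{tab:n4}), while the cycle it generates (Figure~\ref{fig:ham}) carries the label word $(0,0,0,1,1,1,1,0,1,1,0,0,1,0,1)$, its reversal, and these two words are not cyclic shifts of one another.

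The error is repairable: since the reversal of a modified de Bruijn sequence is again a modified de Bruijn sequence, run your construction on the reversed label sequence of $\mathcal{H}$; then your final identification goes through. Note also that the uniqueness statement you invoke there --- two Hamiltonian cycles of $\Gamma_n$ with the same cyclic label word coincide --- is asserted without proof; it is true (if two walks are driven by the same labels from starting vertices differing by $D$, the difference after $k$ steps is $x^kD \bmod x^n$, which vanishes once $k\geq n$, forcing the starting vertices to agree), but it needs at least this one-line argument. The paper avoids the whole orientation bookkeeping by never tracking which of the two mutually reversed cycles the walk traces: it simply represents the sequence attached to $\mathcal{H}$ as $g/F$ (legitimate because $m_{\mathbf{s}}(x)$ divides $F(x)$) and cites Lemma~\ref{teotan2} for coverage of $\Omega$; your more explicit route is where the direction of the shift actually matters, and that is precisely where it goes wrong as written.
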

\begin{proof}
	Since $\mathcal{H}$ is a Hamiltonian cycle in $\Gamma_n$, it corresponds to a modified binary de Bruijn sequence $\mathbf{s}$ which can be represented as $\displaystyle{\frac{g(x)}{F(x)}}$, with $\deg(g(x)) < 2^n-2$. Lemma \ref{teotan2} ensures the existence of the required $g(x)$ and, by Definition \ref{def:walk}, $g(x)$ generates $\mathcal{H}$.
\end{proof}

We recall Lemma \ref{teotan1} before establishing our next result. If $\mathbf{s}$ is a modified binary de Bruijn sequence whose rational fraction representation is $\displaystyle{\frac{g(x)}{F(x)}}$, for some $ g(x) \in \mathbb{F}_2[x]$, then, for any $ 0 \leq k < 2^n-1$, the rational fraction representation of the $k$-shifted sequence $L^k \, \mathbf{s}$ in Equation (\ref{eq:rational}) is given by $\displaystyle{
	\frac{g_k(x)}{F(x)}}$ where 
$g_k(x) := g(x) \, x^{2^n-1-k} \Mod{F(x)}$. Two polynomials $g(x)$ and $h(x)$ may generate the same Hamiltonian cycle $\mathcal{H}$ in $\Gamma_n$. The following lemma explains how to relate the polynomials to one another.

\begin{lemma}\label{polagx}
	Let $n \geq 4$ be given and let 
	$F(x) := \sum_{i=0}^{2^n-2} x^i = 1 + x + x^2+ \ldots + x^{2^n-2}$. Two polynomials $g(x)$ and $h(x)$ in $\mathbb{F}_2[x]$ generate the same Hamiltonian cycle $\mathcal{H}$ in $\Gamma_n$ if and only if 
	\begin{equation}\label{eq:samecycle}
		h(x) = x^k \, g(x) \Mod{F(x)} \mbox{ for some } 0 \leq k < 2^n-1.
	\end{equation}
\end{lemma}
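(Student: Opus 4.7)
The plan is to prove the two implications separately, bridging the combinatorial picture of the Hamiltonian cycle $\mathcal{H}$ and the algebraic shift of polynomials via Lemma~\ref{teotan1}.

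For the forward implication, assume $h(x) = x^k \, g(x) \Mod{F(x)}$ for some $0 \leq k < 2^n-1$. Then for every $i \geq 0$, multiplying by $x^i$ and reducing gives
\[
x^i \, h(x) \equiv x^{i+k} \, g(x) \pmod{F(x)},
\]
so a further reduction modulo $x^n$ yields $(x^i h(x) \Mod{F(x)}) \Mod{x^n} = (x^{i+k} g(x) \Mod{F(x)}) \Mod{x^n}$. Hence the walk $W_h$ is obtained from $W_g$ by cyclically shifting the index by $k$, and both walks traverse the same sequence of vertices in the same cyclic order. Therefore $g(x)$ and $h(x)$ generate the same Hamiltonian cycle $\mathcal{H}$.

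For the reverse implication, suppose $g(x)$ and $h(x)$ both generate $\mathcal{H}$. By Corollary~\ref{cor:existence of g(x)}, each polynomial corresponds via its rational fraction representation $g(x)/F(x)$ and $h(x)/F(x)$ to a modified binary de Bruijn sequence. Because the two walks visit the vertices of $\mathcal{H}$ in the same cyclic order, the sequence represented by $h(x)$ is a cyclic shift of the one represented by $g(x)$; say $h(x)/F(x)$ represents $L^j \, \mathbf{s}$, where $\mathbf{s}$ is the sequence represented by $g(x)/F(x)$ and $0 \leq j < 2^n-1$. Applying Lemma~\ref{teotan1} with $N := 2^n-1$, the rational fraction representation of $L^j \, \mathbf{s}$ is $g_j(x)/F(x)$ where $g_j(x) := x^{N-j} \, g(x) \Mod{F(x)}$. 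Since the polynomial numerator with denominator $F(x)$ is uniquely determined modulo $F(x)$ by the sequence it produces, we conclude $h(x) \equiv g_j(x) \equiv x^{N-j} \, g(x) \pmod{F(x)}$. Setting $k := (N-j) \Mod{N}$ yields $h(x) = x^k \, g(x) \Mod{F(x)}$ with $0 \leq k < 2^n-1$.

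The main point requiring care, and the only place the modified de Bruijn property is really used, is the assertion that the walk $W_g$ uniquely determines the underlying sequence over one full period. This is true because the consecutive $n$-tuples encoded by the vertices at positions $0, 1, \ldots, 2^n - 2$ exhaust all nonzero $n$-tuples, and their pairwise overlaps of length $n-1$ force each bit of the sequence. Once this identification of walks with sequences is justified, the equivalence between ``generating the same cycle'' and the algebraic cyclic shift by a power of $x$ modulo $F(x)$ follows directly from Lemma~\ref{teotan1}.
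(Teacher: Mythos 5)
Your proof is correct, and one half of it diverges from the paper's route. The direction ``$h(x) = x^k g(x) \Mod{F(x)}$ implies same cycle'' is essentially the paper's argument: multiply by $x^i$, reduce modulo $F(x)$ and then modulo $x^n$, and observe that the two walks are index shifts of one another. For the converse you argue differently: you pass to the sequences represented by $g(x)/F(x)$ and $h(x)/F(x)$, note that equal cyclic vertex orders force the sequences to be cyclic shifts of each other, and then read off the numerator of the shifted sequence from Lemma~\ref{teotan1}, using uniqueness of the numerator over the fixed denominator $F(x)$. The paper instead stays at the polynomial level, aligning the two walks at a common vertex $a(x) \in \Omega$ and manipulating exponents directly with $x^{2^n-1} \equiv 1 \Mod{F(x)}$. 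Your route buys rigor where the paper is terse: it makes explicit why agreement at one vertex propagates over the whole period (the paper's write-up concludes from a single coincidence of reductions modulo $x^n$, leaving that propagation implicit), and it reuses Lemma~\ref{teotan1} instead of redoing the index arithmetic. Two small points to tighten, neither fatal: the fact that $h(x)/F(x)$ represents a modified de Bruijn sequence comes from Lemma~\ref{teotan2} (Corollary~\ref{cor:existence of g(x)} only asserts that some generator exists), and the vertices of the walk are not literally the $n$-windows of the sequence but the shifted numerators reduced modulo $x^n$; they determine the windows because $F(x) \bmod x^n = 1 + x + \cdots + x^{n-1}$ is a unit modulo $x^n$ (equivalently, argue via Lemma~\ref{teotan1}), which is the precise reason ``the walk determines the sequence.'' Like the paper, you also tacitly assume $g(x)$ and $h(x)$ are already reduced modulo $F(x)$, which is how the stated equality in (\ref{eq:samecycle}) must be read.
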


\begin{proof}
	To verify that $x^{2^n-1} \, g(x) \Mod{F(x)} = g(x)$ it suffices to use the fact that $F(x)$ divides $x^{2^n-1}+1$, which implies $x^{2^n-1} \Mod{F(x)} = 1 $.
	
	Let $a(x)$ be any element of $\Omega$. If $g(x)$ and $h(x)$ generate the same Hamiltonian cycle $\mathcal{H}$, then there exist $i$ and $j$, where $0 \leq i,j < 2^n-1$, such that 
	\begin{equation}\label{eq:firstdir}
		a(x) = (x^i \, g(x) \Mod{F(x)}) \Mod{x^n} 
		=(x^j \, h(x) \Mod{F(x)}) \Mod{x^n}.
	\end{equation} 
	If $i \geq j$, then letting $k:=i-j$ yields
	\begin{equation}\label{eq:sameH}
		\left(x^{k} \, g(x) \Mod{F(x)}\right) \Mod{x^n} = h(x).
	\end{equation}
	If $i < j$, then multiplying both sides of (\ref{eq:sameH}) by $x^{2^n-1}$ gives us
	\begin{equation}
		\left(x^{2^n-1+k} \, g(x) \Mod{F(x)}\right) \Mod{x^n} = 
		x^{2^n-1} \, h(x).
	\end{equation}
	Since $x^{2^n-1} \, h(x) \Mod{F(x)} = h(x)$, we conclude that
	\[
	h(x) = \left(x^k \, g(x) \Mod{F(x)}\right) \Mod{x^n}.
	\]
	
	Conversely, let $h(x) = x^k \, g(x) \Mod{F(x)}$ for some $k$ with $0 \leq k < 2^n-1$. We take two vertices $A$ and $B$ in $\Gamma_n$ that correspond respectively to 
	\begin{align*}
		a(x) &= \left(x^i \, h(x) \Mod{F(x)}\right) \Mod{x^n-1} \mbox{ and}\\ 
		b(x) &= \left(x^{i+1} \, h(x) \Mod{F(x)}\right) \Mod{x^n-1}
	\end{align*}
	for some $0 \leq i < 2^n-1$.
	Then
	\begin{align}
		a(x) &= \left(x^i \, x^k \, g(x) \Mod{F(x)}\right) \Mod{x^n} \mbox{ and} \notag\\
		b(x) &= \left(x^{i+1} \, x^k \, g(x) \Mod{F(x)}\right) \Mod{x^n}.
	\end{align}
	Using $j:=(i+k) \Mod{2^n-2}$, one writes 
	\begin{align}
		a(x) &= \left(x^j \, g(x) \Mod{F(x)}\right) \Mod{x^n} \mbox{ and} \notag\\
		b(x) &= \left(x^{j+1} \, g(x) \Mod{F(x)}\right) \Mod{x^n}.
	\end{align}
	Thus, both $g(x)$ and $h(x)$ generate the same $\mathcal{H}$ in $\Gamma_n$.
\end{proof}

\begin{example}\label{ex:genpoly}
	Each of the polynomials 
	\begin{align*}
		g(x) &= x^{10} + x^9 + x^7 + x^5 + x^4 + 1,\\
		x \, g(x) &= x^{11} + x^{10} + x^8 + x^6 + x^5 + x, \\
		x^2 \, g(x) &= x^{12} + x^{11} + x^9 + x^7 + x^6 + x^2, \\
		x^3 \, g(x) &= x^{13} + x^{12} + x^{10} + x^8 + x^7 + x^3, \\
		x^4 \, g(x) &= x^{12} + x^{10} + x^7 + x^6 + x^5 + x^3 + x^2 + x + 1.
	\end{align*}
	generates the Hamiltonian cycle $\mathcal{H}$ in Figure \ref{fig:Graph} {\bf Bottom}.
\end{example}

Lemma \ref{polagx} tells us to focus on the polynomial with the least degree that generates $\mathcal{H}$. We formally define this polynomial.

\begin{definition}\label{def:canonical_generator}
	Let $\mathcal{H}$ be a Hamiltonian cycle in $\Gamma_n$. The polynomial with the least degree $\ell$ among all of the polynomials that generate $\mathcal{H}$ is the {\it canonical generator} $c_{\mathcal{H}}(x)$ of $\mathcal{H}$.
\end{definition}

\begin{theorem}\label{thm:canonical_g(x)}
	Let $n \geq 4$ be given and let $\displaystyle{\frac{g(x)}{F(x)}}$ be a rational fraction representation of a modified binary de Bruijn sequence $\mathbf{s}$. Then, there exists a polynomial $g_k(x) \in \mathbb{F}_2[x]$ of degree $2^n-n-2$ such that $\displaystyle{\frac{g_k(x)}{F(x)}}$ is a rational fraction representation of the $k$-shifted sequence $L^k \, \mathbf{s}$ for some $0 \leq k < N$. The polynomial $g_k(x)$ is the canonical generator of the Hamiltonian cycle $\mathcal{H}$ that corresponds to $\mathbf{s}$.
\end{theorem}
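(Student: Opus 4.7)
The plan is to reparametrize the family of polynomials generating $\mathcal{H}$ directly in terms of the bits of $\mathbf{s}$, and then read off the minimum degree from the longest zero-run in $\mathbf{s}$.

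First, using periodicity I would write the generating function as $s(x) = S(x)/(1+x^N)$, where $S(x) := \sum_{i=0}^{N-1} s_i \, x^i$ is the one-period polynomial. Since $1+x^N = (1+x)\,F(x)$ and $s(x) = g(x)/F(x)$, this yields the identity $g(x) = S(x)/(1+x)$. The division is legitimate because $S(1) = \sum_i s_i$ equals the Hamming weight of one period, namely $2^{n-1}$, which is even for $n \geq 2$, so $(1+x) \mid S(x)$. Applying the same reasoning to the shifted sequence $L^k \, \mathbf{s}$ produces $g_k(x) = S_k(x)/(1+x)$, where $S_k(x) := \sum_{i=0}^{N-1} s_{(k+i) \Mod{N}} \, x^i$; a short check shows $x^k \, S_k(x) \equiv S(x) \Mod{1+x^N}$, which is consistent with Lemma~\ref{teotan1}.

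Dividing $S_k(x)$ by $(1+x)$ reduces the degree by exactly one, so
\[
\deg g_k(x) = \deg S_k(x) - 1 = (N - 1 - r_k) - 1 = 2^n - 3 - r_k,
\]
where $r_k$ denotes the number of trailing zero coefficients of $S_k(x)$; equivalently, $r_k$ is the length of the maximal zero block in $\mathbf{s}$ terminating just before position $k$, read cyclically. Minimizing the degree is therefore equivalent to maximizing $r_k$, and $\max_k r_k$ equals the length $R$ of the longest run of zeros in the cyclic sequence $\mathbf{s}$.

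The main step, and the one I expect to be the principal obstacle, is to pin down $R = n-1$. For this, I would invoke the construction of $\mathbf{s}$ from its parent de Bruijn sequence $\widetilde{\mathbf{s}}$: since each binary $n$-string appears exactly once in $\widetilde{\mathbf{s}}$, the substring $0^n$ occurs uniquely, and the longest zero run in $\widetilde{\mathbf{s}}$ is exactly $n$. A counting argument shows that each $(n-1)$-string $w$ appears in $\widetilde{\mathbf{s}}$ exactly twice as a substring, because its two $n$-string extensions $w0$ and $w1$ each appear once; hence both occurrences of $0^{n-1}$ must lie inside the unique $0^n$ block, ruling out any run of $n-1$ zeros elsewhere. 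Deleting one zero from the $0^n$ block therefore gives a longest run of exactly $n-1$ in $\mathbf{s}$, so $R = n-1$ and $\min_k \deg g_k(x) = 2^n - n - 2$. Combining this with Lemma~\ref{polagx}, which asserts that every polynomial generating $\mathcal{H}$ is some $g_k(x)$, Definition~\ref{def:canonical_generator} then identifies the minimizer $g_{k^*}(x)$, for the unique $k^*$ immediately following the truncated zero run, as the canonical generator $c_{\mathcal{H}}(x)$.
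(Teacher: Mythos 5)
Your proof is correct, but it follows a genuinely different route from the paper's. The paper works entirely inside the $\Omega$-traversal framework: it first rules out $\deg(g_r(x)) < 2^n-n-2$ by observing that such a $g_r(x)$ would make $x^n\,g_r(x)$ already reduced modulo $F(x)$, forcing a window equal to $0 \notin \Omega$; it then attains the bound by taking the shift $t$ whose window is $x^{n-1}$, writing $g_t(x)=\widetilde{g}_t(x)\,x^n+x^{n-1}$ and multiplying by $x^{1-n}$ modulo $F(x)$ to land on $g_u(x)\equiv \widetilde{g}_t(x)\,x+1$ of degree exactly $2^n-n-2$. You instead factor the generating function through the one-period polynomial, $g_k(x)=S_k(x)/(1+x)$, so that $\deg g_k(x)=2^n-3-r_k$ with $r_k$ the cyclic zero run ending just before position $k$, and then pin down the longest zero run of $\mathbf{s}$ as exactly $n-1$ by counting occurrences of $(n-1)$-windows in the parent de Bruijn sequence. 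The two arguments rest on the same underlying combinatorics (no all-zero $n$-window, and the window $1\,0^{n-1}$ occurs), but yours trades the algebraic manipulation modulo $F(x)$ for run-length structure: it buys an explicit description of the canonical generator as the shift starting immediately after the unique truncated zero run, and as a by-product shows that the minimizing shift (hence the least-degree generator) is unique, which the paper's proof leaves implicit; the paper's argument, in turn, needs only Lemma~\ref{teotan2} and never appeals to the parent de Bruijn sequence. One small point to make explicit in your write-up: divisibility of $S_k(x)$ by $1+x$ already follows from the hypothesis that a representation over $F(x)$ exists (from $S(x)F(x)=g(x)(1+x)F(x)$), so the weight argument, while correct since the period has $2^{n-1}$ ones, is not strictly needed.
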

\begin{proof}
	We claim that  $\deg(g_k(x)) \geq 2^n-n-2$. Suppose, on the contrary, that there exists some $r$, with $0 \leq r < 2^n-1$, such that $\deg(g_r(x)) < 2^n-n-2$. Let 
	\[
	[n-r] := \begin{cases}
		n - r & \mbox{if } n \geq r,\\
		n-r + 2^n-1 & \mbox{if } n < r.
	\end{cases}
	\]
	Then $g(x) \, x^{[n-r]} = g(x) \, x^{2^n-1-r+n} \equiv g_r(x) \, x^n \Mod{F(x)}$. Therefore, 
	\begin{equation}\label{eq:gx1}
		g(x) \, x^{[n-r]} \equiv (0 \Mod{F(x)}) \Mod{x^n},
	\end{equation}
	which contradicts the fact that $0 \notin \Omega$.
	
	Next, we show the existence of $g_u(x)$, for some $0 \leq u < 2^n-1$, that satisfies $\deg(g(x)) = 2^n-n-2$ and $g_u(0)=1$. Since $\Omega = \{g_k(x) \Mod{x^n} : 0 \leq k < 2^n-1\}$, there exist some $t$, with $ 0 \leq t < 2^n-1$, such that 
	$g_t(x) \equiv x^{n-1} \Mod{x^n}$. If $g_t(x)= \widetilde{g}_{t}(x) \, x^n + x^{n-1}$ with 
	$\deg(\widetilde{g}_{t}(x)) < 2^n-n-2$, then 
	\[
	g(x) \, x^{2^n-1-t} \equiv \widetilde{g}_{t}(x) \, x^n + x^{n-1} \Mod{F(x)}.
	\]
	Since $\gcd(x,F(x))=1$, we multiply both sides by $x^{1-n}$. Let $u := (t+n-1) \Mod{(2^n-1)}$. Then 
	$g(x) \, x^{2^n-1-u} = g_u(x) \equiv \widetilde{g}_{t}(x) \, x + 1 \Mod{F(x)}$. Finally, we infer 
	\[
	\deg(g_u(x))= \deg(\widetilde{g}_{t}(x)) + 1 \leq 2^n-n-2.
	\]
	Thus, $\text{deg}(g_u(x))=2^n-n-2$. By Corollary \ref{cor:existence of g(x)} and Definition \ref{def:canonical_generator}, we confirm that $g_k(x)$ is precisely $c_{\mathcal{H}}(x)$ of $\mathbf{s}$.
\end{proof}

\begin{corollary}\label{cor:correspond_mdbs}
	For $n \geq 4$, let $\Gamma_n$ be given. Then the followings hold.
	\begin{enumerate}
		\item Distinct Hamiltonian cycles in $\Gamma_n$ correspond to inequivalent modified de Bruijn sequences.
		\item The number of distinct Hamiltonian cycles in $\Gamma_n$ is $2^{2^{n-1}-n}$.
	\end{enumerate}
\end{corollary}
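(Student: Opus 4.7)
The plan is to establish both parts via a bijection $\Phi$ between the set of distinct Hamiltonian cycles in $\Gamma_n$ and the set of cyclically inequivalent modified de Bruijn sequences of order $n$, whose cardinality is $2^{2^{n-1}-n}$ as recorded in the introduction.

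For part (1), I would argue by contradiction. Suppose $\mathcal{H}_1 \neq \mathcal{H}_2$ yield sequences $\mathbf{s}_1$ and $\mathbf{s}_2$ with $\mathbf{s}_1 = L^k \mathbf{s}_2$ for some $k$. By Corollary \ref{cor:existence of g(x)}, each $\mathcal{H}_i$ is generated by a polynomial $g_i(x)$ with $\mathbf{s}_i \leftrightarrow g_i(x)/F(x)$. Applying Lemma \ref{teotan1} to the shift relation forces $g_1(x) \equiv x^{2^n-1-k} g_2(x) \Mod{F(x)}$, whereupon Lemma \ref{polagx} forces $g_1$ and $g_2$ to generate the same Hamiltonian cycle, contradicting $\mathcal{H}_1 \neq \mathcal{H}_2$.

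For part (2), the map $\Phi$ from Hamiltonian cycles to equivalence classes of sequences is well defined by Theorem \ref{thm:correspondence} and injective by part (1); what remains is surjectivity. Given any modified de Bruijn sequence $\mathbf{s}$ of order $n$, I would invoke the Mayhew--Golomb characterization asserting that $m_{\mathbf{s}}(x)$ is a product of distinct irreducibles of degree $d \mid n$ with $d \neq 1$ --- precisely the irreducibles whose product equals $F(x)$ by (\ref{eq:setA}). Hence both $m_{\mathbf{s}}(x)$ and its reciprocal divide $F(x)$ (since $F$ is self-reciprocal), so $\mathbf{s}$ admits a representation $g(x)/F(x)$ for some $g(x)$. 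Lemma \ref{teotan2} then guarantees that the walk $W_g$ visits each element of $\Omega$ exactly once, which means it traces a Hamiltonian cycle in $\Gamma_n$ whose associated label sequence is $\mathbf{s}$. Since appending or removing a zero in the longest run of zeros is a bijection between (cyclically inequivalent) de Bruijn and modified de Bruijn sequences, the count $2^{2^{n-1}-n}$ from the introduction carries over to the number of distinct Hamiltonian cycles in $\Gamma_n$.

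The main obstacle I anticipate is the surjectivity step, specifically justifying that every modified de Bruijn sequence admits a rational representation with denominator exactly $F(x)$. The period condition alone only yields $m_{\mathbf{s}}(x) \mid x^{2^n-1}-1 = (x+1) F(x)$; the key leverage comes from the Mayhew--Golomb structure theorem, which rules out $(x+1)$ as a factor of $m_{\mathbf{s}}(x)$ and thereby pins the denominator down to $F(x)$.
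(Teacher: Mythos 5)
Your proof is correct and follows the same overall skeleton as the paper's: part (1) via generator polynomials and part (2) by showing the cycle-to-sequence map is a bijection onto equivalence classes, then importing the count $2^{2^{n-1}-n}$. The differences are in where you place the logical weight, and in both spots you are actually more explicit than the paper. For part (1), the paper simply notes that distinct cycles have distinct generators and asserts inequivalence of the sequences; your contradiction argument, pushing a hypothetical shift relation $\mathbf{s}_1 = L^k\mathbf{s}_2$ through Lemma~\ref{teotan1} to get $g_1(x) \equiv x^{2^n-1-k} g_2(x) \Mod{F(x)}$ and then invoking Lemma~\ref{polagx}, is the rigorous version of that assertion. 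For part (2), the paper invokes Theorem~\ref{thm:canonical_g(x)}, whose hypothesis already assumes a representation with denominator $F(x)$; you correctly identify that the real content of surjectivity is that every modified de Bruijn sequence admits such a representation, and you supply it from the Mayhew--Golomb structure result (quoted at the start of Section~\ref{sec:poly2graph}): $m_{\mathbf{s}}(x)$ is a product of distinct irreducibles of degree $d \mid n$, $d \neq 1$, so $m_{\mathbf{s}}(x)$ and hence its reciprocal divide the self-reciprocal $F(x)$, which rescales the reduced representation to one over $F(x)$. From there, Lemma~\ref{teotan2} giving a walk that hits each element of $\Omega$ exactly once, and the observation that consecutive terms of the walk are joined by arcs of type (\ref{eq:double}) or (\ref{eq:complement}), matches the paper's argument. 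In short: same route, with your version closing two small gaps the paper glosses over.
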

\begin{proof}
	We prove the statements in their order of appearance.
	\begin{enumerate}
		\item Let $\mathcal{H}_i$ be a Hamiltonian cycle in $\Gamma_n$. By Theorem \ref{thm:correspondence}, $\mathcal{H}_i$ corresponds to a modified de Bruijn sequence $\mathbf{s}_i$. By Corollary \ref{cor:existence of g(x)} and Theorem \ref{thm:canonical_g(x)}, there exists a generator polynomial $g(x)$ of degree $2^n-n-2$ such that 
		\[\mathcal{H}_i = \{(x^i~g(x) \Mod{F(x)}) \Mod{x^n} : 0 \leq i < 2^n-1\}.\]
		Let $\mathcal{H}_j$ be a Hamiltonian cycle in $\Gamma_n$ which is distinct from  $\mathcal{H}_i$. Then there exists a polynomial $g'(x) \in \mathbb{F}_2[x]$ such that 
		\[\mathcal{H}_j= \{(x^i~g'(x) \Mod{F(x)}) \Mod{x^n} : 0 \leq i < 2^n-1\}.\]
		Since $ \mathcal{H}_i \neq \mathcal{H}_j$, it is clear that $g'(x) \neq g(x)$. We can then conclude that distinct Hamiltonian cycles correspond to inequivalent modified de Bruijn sequences.
		\item By Theorem \ref{thm:canonical_g(x)}, any modified de Bruijn sequence has a representation $\frac{g(x)}{F(x)}$, with $\deg(g(x))=2^n-2-n$. We know from Lemma \ref{teotan2} that the set 
		\[
		\mathcal{O} := \{(x^i~g(x) \Mod{F(x)}) \Mod{x^n} : 0 \leq i < 2^n-1\}
		\]
		is the set of all vertices in $\Gamma_n.$ If $a(x) \in \Omega$, then the two possible values of $b(x)=x \, a(x)$ are given in Equations (\ref{eq:double}) and (\ref{eq:complement}). Hence, the set $\mathcal{O}$ corresponds to a Hamiltonian cycle in  $\Gamma_n$. Thus, the number of Hamiltonian cycles in $\Gamma_n$ is equal to the number of modified de Bruijn sequences.
	\end{enumerate}
\end{proof}

Every Hamiltonian cycle $\mathcal{H}$ in $\Gamma_n$ has $c_{\mathcal{H}}(x)$ of degree $2^n-n-2$. We now show how to determine $c_{\mathcal{H}}(x)$ for a given $\mathcal{H}$.

\begin{theorem}\label{thm:find_g}
	If $\mathcal{H} $ is a Hamiltonian cycle  in $\Gamma_{n \geq 4}$, then its canonical generator $c_{\mathcal{H}}(x)$ can be determined by the following procedure.
	\begin{enumerate}
		\item Begin by writing $c_{\mathcal{H}}(x) := x^{2^n-n-2}+ \sum_{i=1}^{2^n-n-3} c_i \, x^i + 1$.
		\item Let $k=0,1,\ldots,2^n-2$ and $i = 2^n-2-n-k$. Determine each  $c_i$ by solving the congruence 
		$ x^{n+k} c_{\mathcal{H}}(x) \equiv h_{n+k} \Mod{F(x)} \Mod{x^n}$, where $h_n = \sum_{i=0}^{n-1} x^i \in \mathcal{H}$.
	\end{enumerate}
\end{theorem}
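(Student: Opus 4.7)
The plan is to determine the coefficients of $c_{\mathcal{H}}(x)$ inductively, working from the highest-degree unknown coefficient down to the lowest, by exploiting the congruences in Step~2. By Theorem~\ref{thm:canonical_g(x)} together with $g(0) = 1$ from Corollary~\ref{cor:existence of g(x)} and Theorem~\ref{teotan3}, the canonical generator has degree exactly $2^n-n-2$ and constant term $1$, which is the content of Step~1. The task reduces to finding the binary unknowns $c_1, \ldots, c_{2^n-n-3}$. Since $\mathcal{H}$ is Hamiltonian, the all-ones vertex $\sum_{i=0}^{n-1} x^i$ is visited; we cyclically re-index $\mathcal{H}$ so that this vertex sits at position $n$, fixing $h_n = \sum_{i=0}^{n-1} x^i$ as the anchor of the recurrence.

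The crux is a direct case analysis of $x^{n+k} \, c_{\mathcal{H}}(x) \Mod{F(x)}$ using the reductions $x^{2^n-2} \equiv \sum_{j=0}^{2^n-3} x^j \Mod{F(x)}$ and $x^{2^n-1} \equiv 1 \Mod{F(x)}$. Writing $x^{n+k} c_{\mathcal{H}}(x) = \sum_{j=0}^{2^n-n-2} c_j \, x^{j+n+k}$, I classify each monomial by its exponent $e_j := j + n + k$: if $e_j < 2^n-2$, the reduced term has degree at least $n$ and vanishes modulo $x^n$; if $e_j = 2^n-2$, it contributes $c_j \, (1 + x + \ldots + x^{n-1})$ modulo $x^n$; if $e_j = 2^n-1$, it contributes $c_j$ to the constant coefficient; and if $2^n \leq e_j \leq 2^{n+1}-4$, it contributes $c_j$ to the coefficient of $x^{e_j - (2^n-1)}$ (a single wrap-around suffices throughout the relevant range $0 \leq k \leq 2^n-2$). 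The unique index with $e_j = 2^n-2$ is $j = 2^n-n-2-k$, precisely the coefficient flagged in Step~2 as the one to be determined.

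For the base case $k = 0$ only the leading term $j = 2^n-n-2$ reaches $e_j = 2^n-2$, producing the forced identity $h_n = 1 + x + \ldots + x^{n-1}$ consistent with the anchor and yielding no new information (the index $i = 2^n-n-2$ is already fixed to $1$). For $k \geq 1$, assume $c_{2^n-n-2-k'}$ has been determined for $0 \leq k' < k$. By the case analysis, every contribution to $h_{n+k} \Mod{x^n}$ other than $c_{2^n-n-2-k}(1 + x + \ldots + x^{n-1})$ involves only the fixed endpoints $c_0 = c_{2^n-n-2} = 1$ or coefficients $c_j$ with $j > 2^n-n-2-k$, all of which are known. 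Rearranging the congruence gives
\[
c_{2^n-n-2-k} \, (1 + x + \ldots + x^{n-1}) \equiv h_{n+k} + K_k(x) \Mod{x^n},
\]
where $K_k(x) \in \mathbb{F}_2[x]$ is an explicit polynomial computable from the known data. Since $c_{2^n-n-2-k} \in \mathbb{F}_2$ and $1 + x + \ldots + x^{n-1}$ is nonzero modulo $x^n$, the right-hand side must be either $0$ or $1 + x + \ldots + x^{n-1}$; existence of $c_{\mathcal{H}}(x)$ from Theorem~\ref{thm:canonical_g(x)} guarantees one of these two holds, uniquely pinning down $c_{2^n-n-2-k}$. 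The induction terminates at $k = 2^n-n-3$, after which $c_1$ is known and $c_{\mathcal{H}}(x)$ is fully specified; for $k \geq 2^n-n-2$ the index $i = 2^n-2-n-k$ is non-positive so the remaining congruences serve only as automatic consistency checks. The main obstacle is the bookkeeping in the middle paragraph: one must verify that for larger $k$ the wrap-around contributions at exponents $e_j \geq 2^n$ never land on the slot $j = 2^n-n-2-k$, thereby keeping the fresh unknown isolated on the left-hand side. This is exactly why the high-to-low order of determination is the natural one — it guarantees a single new unknown appears at each step while all earlier coefficients have already been absorbed into $K_k(x)$.
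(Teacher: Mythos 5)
Your proposal is correct and follows essentially the same route as the paper's (much terser) proof: anchor the indexing at the all-ones vertex via the forced identity $x^{n}c_{\mathcal{H}}(x)\equiv\sum_{i=0}^{n-1}x^{i} \Mod{F(x)} \Mod{x^{n}}$, then peel off the coefficients $c_{2^{n}-2-n-k}$ one at a time from the successive vertices $h_{n+k}$. Your explicit reduction bookkeeping (the cases $e_j<2^n-2$, $e_j=2^n-2$, $e_j=2^n-1$, and the single wrap-around) and the high-to-low induction simply supply the details the paper leaves implicit in the claim that ``$h_{n+k}$ determines $c_{2^n-2-n-k}$.''
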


\begin{proof}
	Since $c_{\mathcal{H}}(x) := x^{2^n-2-n}+ \sum_{i=1}^{2^n-n-3} c_i \, x^i + 1$ and $h_n = \sum_{i=0}^{n-1} x^i$ satisfies
	\[
	h_n \equiv x^n c_{\mathcal{H}}(x) \Mod{F(x)} \Mod{x^n},
	\]
	we know that $h_{n+k}$ determines the value of $c_{2^n-2-n-k}$. Thus, all $c_i$s can be determined.
\end{proof}

\begin{example}\label{ex:findg}
	Let $\mathcal{H} \in \Gamma_4$ be given in terms of its successive vertices 
	\[
	(1, 13, 5, 10, 11, 9, 2, 4, 7, 14, 3, 6, 12, 8, 15).
	\]
	We have $\displaystyle{c_{\mathcal{H}}(x) = 
		x^{10} +\sum_{i=1}^{9} c_i \, x^i +1}$ and $\displaystyle{F(x)=\sum_{i=0}^{14} x^i}$. Hence, $x^4 \, c_{\mathcal{H}}(x) = x^3+x^2+x+1$, which is vertex $15$. Consequently, $(c_{\mathcal{H}}(x) \Mod{F(x)}) \Mod{x^4} = x+1$ is vertex $3$. 
	Hence, $c_{\mathcal{H}}(x) = x^{10} + c_9 \, x^9 + c_8 \, x^8 + c_7 \, x^7 + c_6 \, x^6 + c_5 \,x^5 + c_4 \, x^4 + x + 1$. We iterate the process and read the resulting polynomials in terms of their vertices in $\Gamma_4$. Since $x^5 \,c_{\mathcal{H}}(x)$ is vertex $1$, we obtain $c_9 = 0$. Since $x^6 \, c_{\mathcal{H}}(x)$ is vertex $13$, we infer $c_8=1$. We end up with $c_{\mathcal{H}}(x) = x^{10} + x^8 + x^5 + x + 1$ from Example~\ref{ex:genpoly}.
\end{example}

Finally, we are ready to determine the minimal polynomial of a modified binary de Bruijn sequence.
\begin{theorem}\label{thm:gcdmin}
	Let $F(x) := x^{2^n-2} + x^{2^n-3}+ \ldots+ x^2+x+1$. Let $\mathcal{H}$ be a Hamiltonian cycle in $\Gamma_n$ whose canonical generator is $c_{\mathcal{H}}(x)$. The reciprocal polynomial $f^{*}(x)$ of 
	\begin{equation}\label{eq:m}
		f(x):=\frac{F(x)}{d(x)} \mbox{, where } d(x) := \gcd(c_{\mathcal{H}}(x),F(x)), 	
	\end{equation}
	is the minimal polynomial $m_{\mathbf{s}}(x)$ of the modified de Bruijn sequence $\mathbf{s}$ that corresponds to the given $\mathcal{H}$.
\end{theorem}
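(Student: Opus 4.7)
The plan is to reduce the theorem to an application of the rational fraction representation theory recalled in Section~\ref{sec:prelims}, which asserts that the minimal polynomial of a periodic sequence is the reciprocal of the denominator in its simplest rational fraction representation. The main work is to show that $f(x) = F(x)/d(x)$ is in fact that simplest denominator.

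First, I would invoke Corollary~\ref{cor:existence of g(x)} together with Theorem~\ref{thm:canonical_g(x)} to record that the modified de Bruijn sequence $\mathbf{s}$ associated with $\mathcal{H}$ admits the (possibly non-reduced) rational fraction representation
\[
s(x) \;=\; \frac{c_{\mathcal{H}}(x)}{F(x)},
\]
where $\deg(c_{\mathcal{H}}(x)) = 2^n - n - 2 < 2^n - 2 = \deg(F(x))$. This is the relaxed form of representation explicitly permitted in the discussion after equation~(\ref{eq:fracpoly}).

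Next, I would reduce to the simplest form by dividing numerator and denominator by $d(x) := \gcd(c_{\mathcal{H}}(x), F(x))$. Setting $g(x) := c_{\mathcal{H}}(x)/d(x)$ and $f(x) := F(x)/d(x)$, I then verify the three conditions of~(\ref{eq:fracpoly}). Coprimality $\gcd(g(x),f(x))=1$ holds by construction. The degree inequality $\deg(g(x)) < \deg(f(x))$ survives since both degrees drop by $\deg(d(x))$. For $f(0) = 1$, observe that $F(0) = 1$ in $\mathbb{F}_2$, and from the factorisation $F(x) = d(x)\,f(x)$ we read off $1 = F(0) = d(0)\,f(0)$, forcing $d(0) = f(0) = 1$.

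Once these three conditions are in place, $g(x)/f(x)$ is the simplest rational fraction representation of $\mathbf{s}$, and the cited fact from Section~\ref{sec:prelims}, namely that $m_{\mathbf{s}}(x) = f^{*}(x)$, delivers the theorem. The main (minor) obstacle I foresee is simply careful bookkeeping to confirm the three conditions simultaneously; there is no heavy technical step, because the substantive content has already been absorbed into Theorem~\ref{teotan3}, Lemma~\ref{teotan2}, and Theorem~\ref{thm:canonical_g(x)}. In particular, the potentially delicate point that $d(x)$ need not be trivial is exactly what motivates the relaxed rational fraction representation, and dividing it out is the whole mechanism of the theorem.
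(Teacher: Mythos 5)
Your proposal is correct and follows essentially the same route as the paper: represent $\mathbf{s}$ as $c_{\mathcal{H}}(x)/F(x)$ via the canonical generator, cancel $d(x)=\gcd(c_{\mathcal{H}}(x),F(x))$ to reach the reduced representation, and then invoke the fact that the minimal polynomial is the reciprocal of the reduced denominator. Your explicit verification of the conditions in~(\ref{eq:fracpoly}) (coprimality, degree drop, and $f(0)=1$ from $F(0)=d(0)f(0)=1$) only spells out details the paper's proof leaves implicit.
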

\begin{proof}
	Since $c_{\mathcal{H}}(x)$ is the canonical generator of $\mathcal{H}$ in $\Gamma_n$, we know that 
	\[
	\Omega = \left\{(x^i \, c_{\mathcal{H}}(x) \Mod{F(x)}) \Mod{x^n} : 
	0 \leq i \leq 2^n-2 \right\}.
	\]
	Hence, $\displaystyle{\frac{c_{\mathcal{H}}(x)}{F(x)}}$ is a rational fraction representation of $\mathbf{s}$. If $d(x):= \gcd(c_{\mathcal{H}}(x),F(x))$, then 
	\[
	c_{\mathcal{H}}(x) := d(x) \, \widetilde{c}(x) \mbox{ and } F(x) = d(x) \, f(x).
	\]
	Thus, $\displaystyle{\frac{\widetilde{c}(x)}{f(x)}}$ is also a rational fraction representation of $\mathbf{s}$, whose minimal polynomial is $m_{\mathbf{s}}(x) = f^{*}(x)$.
\end{proof}

\begin{example}
	When $n=4$, the canonical polynomial 
	$c_{\mathcal{H}}(x)= x^{10} + x^7 + x^5 + x + 1$ generates the cycle of vertices $(1, 2, 11, 9, 13, 5, 10, 4, 7, 14, 3, 6, 12, 8,  15)$. The arcs forming the Hamiltonian path is
	\begin{multline*}
		(1,2), \, (2, 11), \, (11,9), \, (9,13), \, (13,5), (5,10), \, (10, 4), \\
		(4,7), \, (7,14), \, (14,3), (3,6), \, (6, 12), \, (12,8), \, (8,15)
	\end{multline*}
	and the corresponding sequence is $\mathbf{s}=(0,1,1,1,1,0,0,1,0,1,0,0,0,1,1)$. Since $d(x)=\gcd(F(x),c_{\mathcal{H}}(x)) = x^2 + x + 1$,
	\[
	m_{\mathbf{s}}(x) = f^{*}(x) = f(x) = \frac{F(x)}{d(x)} = x^{12} + x^9 + x^6 + x^3 + 1.
	\]
\end{example}

This work has, thus, provides a systematic method to determine the minimal polynomial of a modified binary de Bruijn sequence. As a concluding remark we highlight that if one can, for any $n \geq 3$, characterize the occasions for which $\gcd((c_{\mathcal{H}}(x), F(x)) = 1$, then we can confirm that there exist modified de Bruijn sequences with maximal complexity $2^n-2$. Computational evidences for small values of $n$ strongly suggest that most modified de Bruijn sequences have maximal complexity. Determining a closed formula for the number of such sequences is a worthy research challenge to solve.

\section*{Acknowledgements}
Musthofa is supported by the Indonesian Endowment Fund for Education, known by its abbreviation LPDP in Bahasa Indonesia, a full-ride scholarship from the Indonesian Ministry of Finance. Nanyang Technological University Grant 04INS000047C230GRT01 supports the research carried out by M.~F.~Ezerman.


\end{document}